\documentclass{article}

\usepackage{microtype}
\usepackage{graphicx}
\usepackage{subcaption}
\usepackage{booktabs} 

\usepackage{hyperref}

\usepackage[accepted]{icml2026}

\usepackage{mathtools}

\usepackage{pifont}

\usepackage[utf8]{inputenc} 
\usepackage[T1]{fontenc}    
\usepackage[dvipsnames]{xcolor}
\definecolor{MyPurple}{HTML}{6B67EE}
\usepackage{booktabs}       
\usepackage{amsfonts}       
\usepackage{nicefrac}       
\usepackage{microtype}      
\usepackage{xcolor}         
\usepackage{pagecolor}
\usepackage{amsmath}
\usepackage{amssymb}
\usepackage{graphicx}
\usepackage{subcaption}
\usepackage{colortbl}
\usepackage{algorithm}
\usepackage{subcaption}
\usepackage{tabularx}
\usepackage{amsmath,amssymb,amsthm,bbm}
\usepackage{multirow}
\usepackage{adjustbox}
\usepackage{algorithm}
\usepackage{algpseudocode}
\usepackage{amsmath}
\usepackage[noEnd=false]{algpseudocodex}
\usepackage[toc,page]{appendix} 
\usepackage{tocloft} 
\usepackage{titlesec}
\usepackage{booktabs}
\usepackage{threeparttable}
\usepackage{enumitem}
\setlist[enumerate]{left=10pt}
\usepackage[font=small]{caption}
\usepackage{wrapfig}
\usepackage{listings}
\usepackage{booktabs}
\usepackage{subcaption}
\usepackage{float}

\definecolor{canyon}{RGB}{205, 92, 92}

\usepackage{makecell}

\usepackage{mdframed}
\usepackage{thm-restate}

\usepackage{thmtools}
\usepackage[most]{tcolorbox}
\tcbuselibrary{theorems}

\tcolorboxenvironment{proposition}{
  colback=gray!10,
  colframe=gray!10,
  boxrule=0pt,
  arc=0pt,
  left=6pt,
  right=6pt,
  top=6pt,
  bottom=6pt,
  enhanced,
}

\tcolorboxenvironment{theorem}{
  colback=gray!10,
  colframe=gray!10,
  boxrule=0pt,
  arc=0pt,
  left=6pt,
  right=6pt,
  top=6pt,
  bottom=6pt,
  enhanced,
}

\tcolorboxenvironment{definition}{
  colback=gray!10,
  colframe=gray!10,
  boxrule=0pt,
  arc=0pt,
  left=6pt,
  right=6pt,
  top=6pt,
  bottom=6pt,
  enhanced,
}

\tcolorboxenvironment{lemma}{
  colback=gray!10,
  colframe=gray!10,
  boxrule=0pt,
  arc=0pt,
  left=6pt,
  right=6pt,
  top=6pt,
  bottom=6pt,
  enhanced,
}

\tcolorboxenvironment{box}{
  colback=gray!10,
  colframe=gray!10,
  boxrule=0pt,
  arc=0pt,
  left=6pt,
  right=6pt,
  top=6pt,
  bottom=6pt,
  enhanced,
}

\declaretheorem[
  name=Proposition,
  numberwithin=section,
  refname={Proposition,Propositions},
  Refname={Proposition,Propositions}
]{proposition}

\declaretheorem[
  name=Lemma,
  numberwithin=section,
  refname={Lemma,Lemmas},
  Refname={Lemma,Lemmas}
]{lemma}

\declaretheorem[
  name=Theorem,
  numberwithin=section,
  refname={Theorem,Theorems},
  Refname={Theorem,Theorems}
]{theorem}

\declaretheorem[
  name=Corollary,
  numberwithin=section,
  refname={Corollary,Corollaries},
  Refname={Corollary,Corollaries}
]{corollary}

\usepackage[capitalize,noabbrev]{cleveref}

\usepackage[textsize=tiny]{todonotes}

\icmltitlerunning{TD3B: Transition-Directed Discrete Diffusion for Allosteric Binder Generation}

\begin{document}

\twocolumn[
  \icmltitle{TD3B: Transition-Directed Discrete Diffusion for \\Allosteric Binder Generation}



  \icmlsetsymbol{equal}{*}

  \begin{icmlauthorlist}
    \icmlauthor{Hanqun Cao}{equal,cuhk}
    \icmlauthor{Aastha Pal}{equal,bio}
    \icmlauthor{Sophia Tang}{cis}
    \icmlauthor{Yinuo Zhang}{cis,nus}
    \icmlauthor{Jingjie Zhang}{cuhk}
    \icmlauthor{Pheng Ann Heng}{cuhk}
    \icmlauthor{Pranam Chatterjee}{bio,cis}
  \end{icmlauthorlist}

  \icmlaffiliation{cuhk}{Department of Computer Science and Engineering, The Chinese University of Hong Kong}
  \icmlaffiliation{bio}{Department of Bioengineering, University of Pennsylvania}
  \icmlaffiliation{cis}{Department of Computer and Information Science, University of Pennsylvania}
  \icmlaffiliation{nus}{Centre for Computational Biology, Duke-NUS Medical School}

  \icmlcorrespondingauthor{Pranam Chatterjee}{pranam@seas.upenn.edu}

  \icmlkeywords{Machine Learning, ICML}

  \vskip 0.3in
]



\printAffiliationsAndNotice{}  

\begin{abstract}
Protein function is often controlled by ligands that bias the direction of state transitions, such as agonists and antagonists, rather than stabilizing a single conformation. This is especially important for clinically relevant G protein-coupled receptors (GPCRs), where therapeutic efficacy depends on functional directionality. Structure-based design methods optimize binding to static conformations and cannot represent non-reversible, directional effects or systematically distinguish agonist from antagonist behavior. To address this gap, we introduce \textbf{T}ransition-\textbf{D}irected \textbf{D}iscrete \textbf{D}iffusion for allosteric \textbf{B}inder design (\textbf{TD3B}), a sequence-based generative framework that designs binders with specified agonist or antagonist behavior via a directional transition control objective. TD3B combines a target-aware Direction Oracle, a soft binding-affinity gate, and amortized fine-tuning of a pre-trained discrete diffusion model, enabling targeted agonist and antagonist generation decoupled from binding affinity and unattainable by equilibrium-based or inference-only guidance baselines. The code and checkpoints are available at \url{https://huggingface.co/ChatterjeeLab/TD3B}.
\end{abstract}

\begin{figure}[t]
    \centering
    \includegraphics[width=\linewidth]{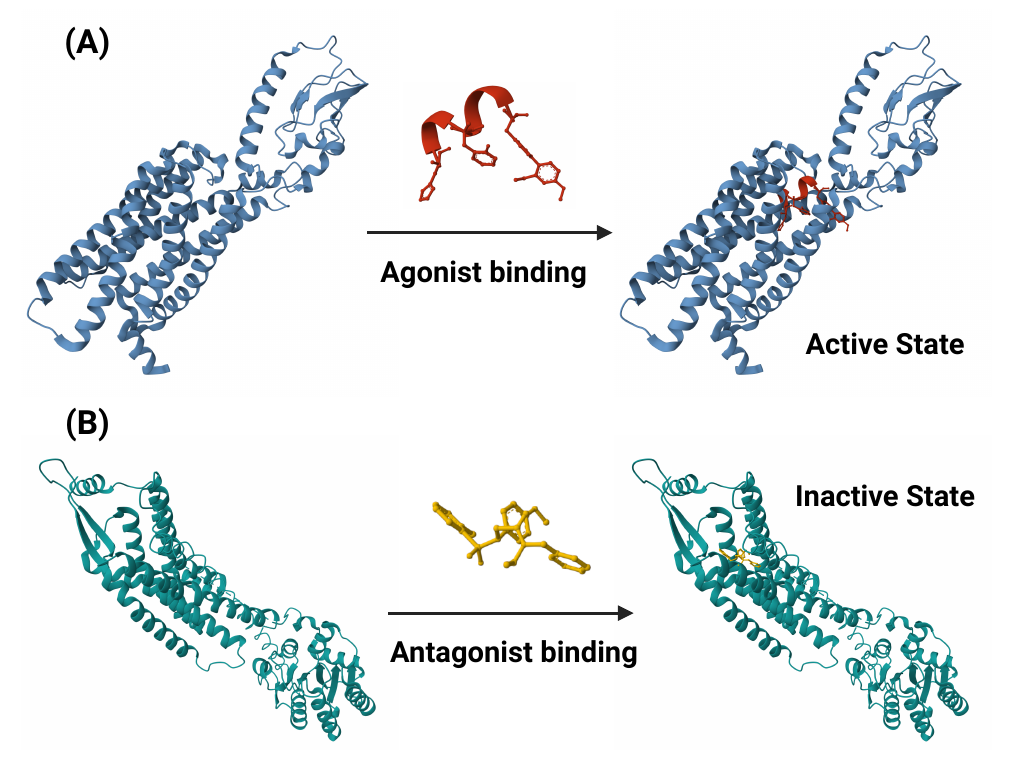} 
    \caption{Structural mechanisms of agonist and antagonist peptide binding. \textbf{(A)} Agonist binding (red) triggers conformational shifts in the target protein (blue) to its Active State, initiating downstream signaling. \textbf{(B)} Antagonist binding (yellow) stabilizes the Inactive State by occupying the binding pocket without inducing structural changes, effectively blocking signal activation.}
    \label{fig:intro}
\end{figure}

\section{Introduction}

Protein allostery governs regulation and control across signaling, transport, and transcriptional systems, yet its functional effects are fundamentally dynamic \cite{motlagh2014ensemble, weikl2014conformational}. In many biologically relevant settings, most notably for agonists and antagonists, function depends not on stabilizing a single conformation, but on biasing the direction of transitions between macrostates such as activation and deactivation \cite{henzler2007dynamic, cao2021allosteric}. Despite this, most contemporary binder design algorithms, such as RFdiffusion \cite{watson2023rfdiffusion}, BindCraft \cite{pacesa2025bindcraft}, BoltzGen \cite{stark2025boltzgen}, and GPCR-specialized extensions like RareFoldGPCR \cite{li2025rarefoldgpcr}, treat proteins as fixed objects and frame design around stabilizing a target structure or interface. While these methods can generate binders or even active agonists, they remain intrinsically tied to equilibrium structural priors and lack a mechanism to represent or control transition asymmetry. As a result, they cannot distinguish or systematically design agonist versus antagonist behavior, since static structures alone do not encode non-reversible, directional effects. Consequently, binders whose function arises from reshaping kinetic pathways rather than stabilizing endpoints lie outside the representational scope of structure-centric design approaches.

Recent advances in discrete generative modeling have produced high-capacity peptide language models that capture the structure of peptide sequence space independent of downstream objectives \cite{tang2025peptune, tang2025tr2d2, chen2025mogdfm, chen2025areuredi, vincoff2025soapia, tang2025gumbelfm}. In particular, masked discrete diffusion language models (MDLMs) such as PepTune learn strong unconditional priors over valid peptide sequences, decoupling sequence syntax and diversity from task-specific optimization \cite{tang2025peptune, tang2025tr2d2}. Building on this foundation, guidance strategies such as PepTune and TR2-D2 introduce lightweight fine-tuning and objective-guided sampling mechanisms that bias generation toward desired properties without retraining the base model from scratch \cite{tang2025peptune, tang2025tr2d2}. Our approach leverages this separation by treating directional allosteric control as a guidance objective layered on top of an existing discrete diffusion backbone, rather than as a new generative architecture. The central question we address is:

\begin{center}
    \textit{What form should such a guidance objective take when the desired effect is directional modulation of protein states rather than equilibrium binding?}
\end{center}

In this work, we introduce \textbf{T}ransition-\textbf{D}irected \textbf{D}iscrete \textbf{D}iffusion for allosteric \textbf{B}inder design (\textbf{TD3B}), a discrete generative framework that treats directional allostery as a key design objective. We model binder action through \emph{sequence-conditioned transition operators} over protein macrostates, explicitly accommodating non-reversible, antisymmetric state changes. Our approach leverages directional supervision to bias generation toward binders that promote or suppress specific state transitions. This reframes binder design as a non-equilibrium transport problem, enabling the generation of binders that control protein function by reshaping transition directionality rather than stabilizing conformations.

Our contributions are threefold:
\begin{enumerate}
    \item \textbf{A transition-operator formulation of directional allosteric control.}  
    We formalize binder-mediated allostery as a sequence-conditioned transition operator over protein macrostates, with binary $\pm 1$ supervision capturing the \emph{sign} of transition asymmetry. This makes directionality and non-reversibility explicit modeling targets beyond static or equilibrium-based representations, while remaining honest about the granularity of supervision: we do not regress continuous kinetic rates, but rather use the operator formalism to motivate a coarse, learnable design objective.
    \item \textbf{A directionally guided generative framework for binder design.}  
    We co-design three components for direction-controlled generation: (i) a target-aware Direction Oracle that classifies agonist vs.\ antagonist effect for arbitrary target proteins; (ii) a \emph{gated} reward in which a pre-trained affinity predictor acts as a soft binding gate (rather than as a Pareto objective traded off against direction), and the oracle supplies the directional signal; and (iii) tree-search amortized fine-tuning of a pre-trained MDLM via importance-weighted denoising and a contrastive loss enforcing directional separation in representation space.
    \item \textbf{Empirical directional selectivity beyond static baselines.}  
    We demonstrate that contrastive, direction-based fine-tuning produces binders that selectively bias agonistic transitions while minimally affecting reverse transitions, capturing functional behaviors that structure-based and inference-only baselines cannot achieve through post-hoc filtering.
\end{enumerate}

\section{Related Works}

\subsection{Allosteric Binder Design}
Classical allosteric theory originated from the MWC concerted transition \cite{monod1965nature} and KNF sequential induced-fit \cite{koshland1966comparison} models, while early discovery relied on serendipitous screening hits \cite{lu2019allosteric}. With the expansion of structural databases \cite{shen2016asd}, computational methods emerged to predict allosteric sites \cite{huang2013allosite,akbar2018allo}, hotspots \cite{clarke2016identifying}, and communication pathways \cite{tan2019allomaps,halabi2009protein} from static structures. However, static approaches miss cryptic sites that remain occluded in certain conformations. Dynamics-based methods using MD simulations and Markov state models \cite{shukla2014activation,bowman2015discovery} address this limitation but incur high computational costs.

\textit{De novo} allosteric design has evolved from early Rosetta-based side-chain networks \cite{churchfield2016novo,pirro2020allosteric} to recent modular rigid-body coupling strategies leveraging RFDiffusion \cite{watson2023rfdiffusion} and ProteinMPNN \cite{dauparas2022robust}, enabling peptide-responsive ring architectures and effector-induced cage disassembly \cite{pillai2024novo}. Complementary approaches include structure-based design for GPCRs \cite{li2025rarefoldgpcr} and Chemical Language Model (CLM)-based generation \cite{ballarotto2023novo}. In contrast, TD3B reframes directional design as a non-equilibrium transport problem, introducing directional supervision such that binding agents act as one-way valves controlling state transition directionality.

\subsection{Conditional Generation for Discrete Diffusion}
Diffusion models have emerged as powerful unsupervised generative frameworks capable of capturing distributions over discrete spaces, with conditional generation enabling efficient sample exploration \cite{austin2021d3pm,lou2024discrete,sahoo2024mdlm,shi2024mdlm}. To steer generation toward desired properties, guided generation approaches incorporate classifier gradients into the sampling process. These include training-free methods such as Classifier Guidance (CG) and Sequential Monte Carlo (SMC) \cite{dhariwal2021diffusion,nisonoff2024unlocking,chung2022diffusion,wu2023practical,dou2024diffusion,pmlr-v235-phillips24a}, as well as Classifier-Free Guidance (CFG) \cite{ho2022classifier}. While computationally efficient, these approaches struggle to provide complex guidance in discrete domains, where gradient-based steering is inherently limited by non-differentiable operations. 

Reinforcement learning-based methods address this limitation by enabling discrete diffusion models to learn more sophisticated conditional distributions through environment interaction. Policy gradient approaches such as DRAKES and GLID$^2$E \cite{wang2025finetuning,cao2025glide} update model policies via reward signals, while tree-search methods including PepTune and TR2-D2 \cite{tang2025peptune,tang2025tr2d2} offer greater flexibility for multi-condition and multi-objective sampling by explicitly exploring the combinatorial space.

\begin{figure*}[t]
    \centering
    \includegraphics[width=\linewidth]{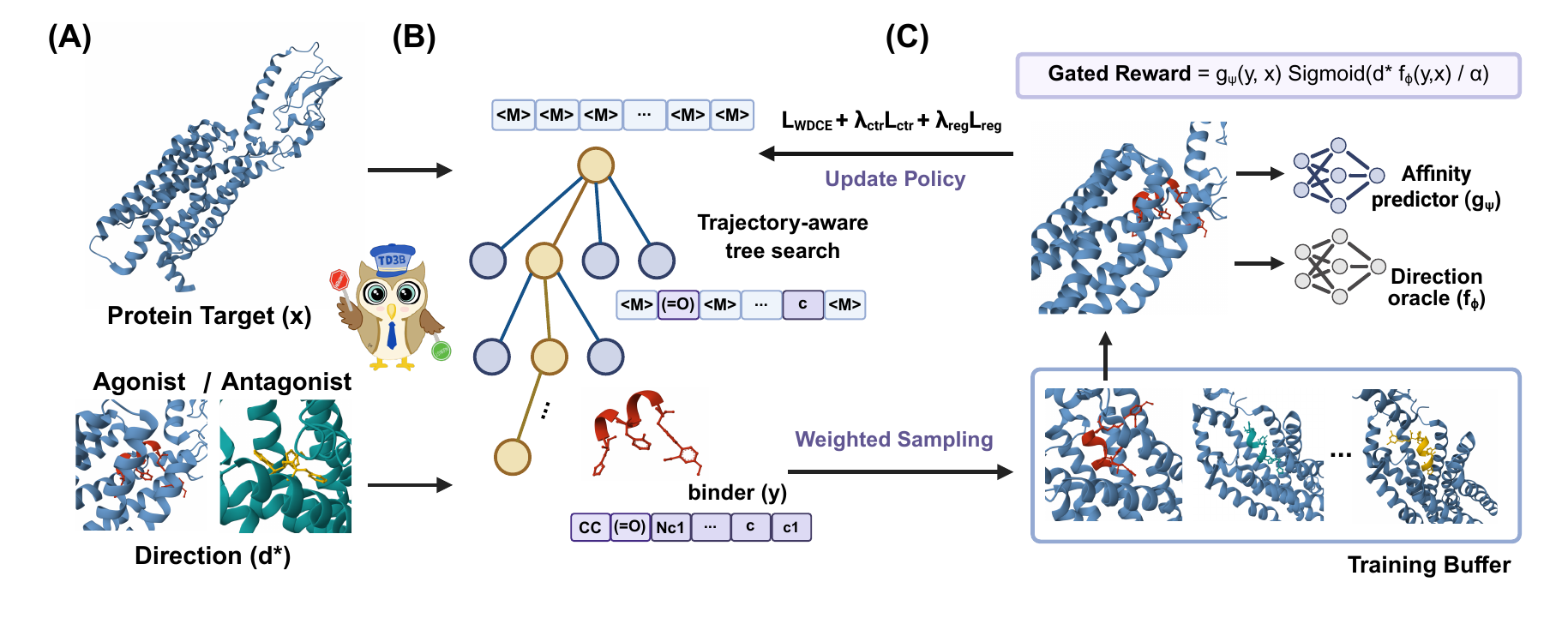} 
    \caption{\textbf{Overview of the TD3B framework.} \textbf{(A)} TD3B enables flexible peptide binder design with specified directionality for diverse protein targets; the 3D rendering illustrates the biological setting (active vs.\ inactive macrostates) — TD3B itself operates on \emph{sequence} representations of both the target protein $x$ and the binder $y$, not on 3D structures. \textbf{(B)} Sampling phase: TD3B performs trajectory-aware tree search over masked discrete diffusion trajectories, conditioned on the target sequence and desired direction $d^\star$, and produces binder candidates via weighted sampling. \textbf{(C)} Fine-tuning phase: the policy model is updated on samples drawn from a replay buffer, guided by the gated reward (Eq.~\ref{eq:gated_reward}), with combined WDCE, contrastive, and KL terms.}
    \label{fig:main}
\end{figure*}

\section{Preliminaries}
\label{sec:preliminaries}
 
This section reviews the generative modeling primitives underlying our approach. We describe masked discrete diffusion language models (MDLMs) as a general-purpose backbone for sequence generation \cite{sahoo2024mdlm, austin2021d3pm, shi2024mdlm} and introduce amortized objective-guided fine-tuning as a mechanism for incorporating external objectives \cite{tang2025tr2d2}. These components are standard or well-established; we defer the dynamical formulation underlying our method (macrostates, transition operators, and directional asymmetry) and the specifics of our proposed framework to Section~\ref{sec:problem}.
 
\subsection{Discrete Sequence Spaces}
 
Let $\mathcal{A}$ denote a finite alphabet and $\mathcal{A}^L$ the space of length-$L$ sequences. In this work, sequences correspond to binders, and the formulation applies to arbitrary sequence-based binders, with peptides serving as a concrete instantiation. We write $y \in \mathcal{A}^L$ for a clean sequence and use $x$ to denote target protein sequences.
 
\subsection{Masked Discrete Diffusion Language Models}
 
Masked discrete diffusion language models (MDLMs) define generative processes over discrete sequences by progressively denoising corrupted inputs. A forward noising process $q_t(y_t \mid y)$ maps a clean sequence $x$ to a partially corrupted version $y_t$ at time $t \in [0,1]$, typically by masking tokens according to a time-dependent schedule. The reverse process is parameterized by a neural network that predicts the distribution of masked tokens conditioned on the unmasked context and time. Training proceeds by minimizing a denoising cross-entropy objective:
\begin{equation}
\begin{aligned}
\mathcal{L}_{\mathrm{DCE}}(\theta, y) = &\mathbb{E}_{t\sim \mathcal{U}(0,1)} \Bigg[ \frac{1}{t} \mathbb{E}_{q_t(\tilde{y}_t|y)} \\
&\sum_{\ell : y_t^\ell = \boldsymbol{M}} -\log p_\theta \left(y^\ell \mid y^{\text{UM}}_t, t\right)_{y^\ell} \Bigg],
\end{aligned}
\end{equation}
where $\boldsymbol{M}$ denotes the special mask token, $y^{\mathrm{UM}}_t$ represents the set of unmasked tokens at diffusion time $t$, the index $\ell$ ranges over all masked sequence positions, and $p_\theta(y^\ell \mid y^{\mathrm{UM}}_t,t)$ is the model's predicted distribution over the alphabet $\mathcal{A}$ at position $\ell$. Once trained, an MDLM defines an unconditional distribution over valid sequences and can be sampled by iteratively denoising from a fully masked input. Importantly, the MDLM captures the combinatorial structure of sequence space independently of any downstream task.
 
\subsection{Amortized Objective-Guided Fine-Tuning of MDLMs}
\label{subsec:amortized_guidance}
Let $p_{\theta_0}(y)$ denote a pre-trained MDLM defining an unconditional distribution over sequences. Objective-guided sequence design seeks to bias this distribution toward sequences that score highly under an external objective $S(y)$, while preserving the structural prior learned by $p_{\theta_0}$.
 
Amortized fine-tuning methods, such as TR2-D2 \cite{tang2025tr2d2}, achieve this by learning a new parameterization $p_{\theta}(y)$ that approximates a reward-tilted target distribution:
\begin{equation}
p^{\star}(x)
\propto
p_{\theta_0}(x)\exp\left(\frac{S(y)}{\alpha}\right),
\label{eq:reward_tilt}
\end{equation}
where $\alpha > 0$ is a regularization coefficient controlling the strength of deviation from the pre-trained prior $p_{\theta_0}$, and $S(y)$ is the external scalar objective determining high-scoring sequences that we want the fine-tuned model to upweight.This tilted distribution is the optimal solution to the KL-regularized objective $\max_p\mathbb{E}_p[S(y)]-\alpha\mathrm{KL}(p\|p_{\theta_0})$: sequences with higher scores are exponentially favored, while $\alpha$ governs the trade-off between staying close to $p_{\theta_0}$ (large $\alpha$) and aggressively concentrating mass on top scorers (small $\alpha$). This formulation defines an implicit energy-based reweighting of the base model without modifying the corruption process or denoising schedule.
 
Since direct sampling from $p^{\star}$ is intractable, amortized fine-tuning optimizes $\theta$ via a weighted denoising cross-entropy (WDCE) objective. Let $y_{0:1}$ denote a denoising trajectory with final sequence $y = y_1$. The fine-tuning objective is:
\begin{align}
\mathcal{L}_{\mathrm{WDCE}}&(\theta)
=\mathbb{E}_{p^\star(y)}[\mathcal{L}_{\text{DCE}}(\theta; y)]\\
&=\mathbb{E}_{y_{0:1}\sim \mathbb{P}^\star}[\mathcal{L}_{\text{DCE}}(\theta; y_1)]\\
&=\mathbb{E}_{y_{0:1}\sim \mathbb{P}^v}\bigg[\underbrace{\frac{\mathrm{d}\mathbb{P}^\star}{\mathrm{d}\mathbb{P}^v}(y_{0:1})}_{w(y_{0:1})}\mathcal{L}_{\text{DCE}}(\theta; y_1)\bigg],
\label{eq:wdce}
\end{align}
where $\mathbb{P}^\star$ is the path distribution induced by the target $p^\star$, $\mathbb{P}^v$ is a path distribution generated by the current policy from which we can sample from, and $w(y_{0:1})$ is the importance weight that corrects the mismatch between the two. Under the assumption that the trajectory distribution factorizes and the reward depends only on the final sequence, the trajectory-level importance weight decomposes as:
\begin{equation}
w(y_{0:1}) \propto \exp\!\left(\frac{S(y_1)}{\alpha}\right) \cdot \prod_{n=1}^T \frac{p_{\theta_0}(y_{t_{n-1}} \mid y_{t_n})}{p_{\bar{\theta}}(y_{t_{n-1}} \mid y_{t_n})},\label{eq:importance-weight}
\end{equation}
where the trajectory is discretized as $0 = t_0 < t_1 < \cdots < t_T = 1$, the first factor is the reward tilt evaluated at the clean sample $y_1$, and the product is an importance ratio correcting for sampling under the proposal $p_{\bar{\theta}}$ rather than the pre-trained reverse process $p_{\theta_0}$.
 
This procedure internalizes objective $S$ into the model's sampling distribution, minimizing the need for extensive inference-time search. While lightweight methods like best-of-$K$ selection can still refine quality, this amortized initialization is notably more efficient than guidance-only approaches. Fine-tuning updates only the denoising conditionals, leaving the diffusion and corruption schedules intact.

\section{Problem Formulation}
\label{sec:problem}
 
We formalize directional allosteric binder design as an amortized objective-guided sequence generation problem, where a pre-trained MDLM is fine-tuned to generate sequences biasing protein state transitions in a specified direction. At a high level, TD3B proceeds in three stages: \textbf{(1)} a target-aware Direction Oracle (Section~\ref{sec:dir_oracle}) is trained to predict whether a candidate binder promotes activation (agonist) or stabilizes the inactive state (antagonist) given a target protein. \textbf{(2)} This oracle is combined with a pre-trained binding-affinity predictor into a \emph{gated reward} (Section~\ref{subsec:affinity_objective}): the affinity model acts as a soft gate that filters non-binders, while the oracle provides the directional signal. \textbf{(3)} The gated reward is used to fine-tune a pre-trained discrete diffusion generator via importance-weighted denoising (Section~\ref{sec:wdce}), with a contrastive loss (Section~\ref{sec:contrastive}) maintaining separation between agonist and antagonist representations and a KL term keeping the fine-tuned policy close to the pre-trained prior (Figure~\ref{fig:main}). To set up the design objective, we first introduce a coarse-grained dynamical view of protein state transitions (Sections~\ref{sec:macrostates}--\ref{sec:dir_asym}), then describe directional supervision and the components of our generative framework (Sections~\ref{sec:data_supervision}--\ref{sec:design_task}).
 
\subsection{Protein Macrostates and Coarse-Grained State Shifts}
\label{sec:macrostates}
 
To reason about allosteric function, we adopt a coarse-grained description of protein state shifts grounded in the macrostate / Markov state model literature \cite{shukla2014activation,bowman2015discovery,noe2009constructing}. Let
\begin{equation}
\mathcal{S} = \{s_1,\dots,s_K\}
\end{equation}
denote a finite set of macrostates corresponding to functionally distinct configurations, such as inactive/active or closed/open. In the absence of a binder, protein state shifts are modeled as a continuous-time Markov chain (CTMC) with generator:
\begin{equation}
Q_0 : \mathcal{S}\times\mathcal{S}\to\mathbb{R},
\quad
Q_0(s_i,s_i) = -\sum_{j\neq i} Q_0(s_i,s_j).
\end{equation}
This abstraction captures state-to-state transition behavior without committing to atomistic trajectories or detailed kinetic models.
 
\subsection{Sequence-Conditioned Transition Operators}
\label{sec:seq_cond_trans}
 
A binder sequence $y \in \mathcal{A}^L$ may alter the transition structure of proteins, consistent with the ligand-induced dynamics view of allostery \cite{cao2021allosteric,motlagh2014ensemble}. We represent this effect by a sequence-conditioned generator:
\begin{equation}
Q^{(y)} = Q_0 + \Delta Q^{(y)},
\end{equation}
where $\Delta Q^{(y)}$ denotes a sequence-dependent perturbation of transition rates. No symmetry or reversibility is assumed. In general, the binder-conditioned generator is asymmetric,
\begin{equation}
Q^{(y)}(s_i,s_j) \neq Q^{(y)}(s_j,s_i),
\end{equation}
and the resulting dynamics need not satisfy detailed balance:
\begin{equation}
\pi^{(y)}(s_i) Q^{(y)}(s_i,s_j)
\neq
\pi^{(y)}(s_j) Q^{(y)}(s_j,s_i),
\end{equation}
While the stationary distribution $\pi^{(y)}$ exists for finite irreducible CTMCs, it does not imply reversibility; thus, these generators lack a scalar energy gradient representation.
 
\subsection{Directional Asymmetry}
\label{sec:dir_asym}
 
For any ordered state pair $(s_i,s_j)$, we define the directional asymmetry induced by a sequence $y$ as:
\begin{equation}
\Delta_{ij}(y)
:=
Q^{(y)}(s_i,s_j) - Q^{(y)}(s_j,s_i).
\end{equation}
This quantity captures the net bias of transitions between macrostates. Directional allosteric effects correspond to consistent signs of $\Delta_{ij}(y)$ for selected transitions. We emphasize that neither the absolute magnitude of $\Delta_{ij}(y)$ nor the full generator $Q^{(y)}$ is assumed to be observable; only coarse directional information may be available in practice. The transition-operator formalism above provides motivation and theoretical grounding for our design objective: it explains \emph{why} directionality is fundamentally distinct from static binding (because agonist/antagonist effects arise from asymmetric, non-reversible perturbations of transition rates), but the operator $Q^{(y)}$ itself is not parameterized or regressed during training. Our supervision instead captures only the \emph{sign} of $\Delta_{ij}(y)$, as detailed next.
 
\subsection{Data and Directional Supervision}
\label{sec:data_supervision}
 
We assume access to a dataset:
\begin{equation}
\mathcal{D} = \{(x^{(n)}, y^{(n)}, a^{(n)})\}_{n=1}^N,
\end{equation}
where $x^{(n)}$ is the target protein sequence, $y^{(n)} \in \mathcal{A}^L$ is a binder sequence, and
\begin{equation}
    a^{(n)} \in \{\text{full agonist},\ \text{partial agonist},\ \text{antagonist},\ \text{negative}\}\nonumber
\end{equation} 
is a categorical functional action label. Negative labels indicate lack of binding and are excluded from directional supervision.
 
For each target protein, we adopt a two-macrostate abstraction:
\begin{equation}
    \mathcal{S} = \{s_{\mathrm{inactive}}, s_{\mathrm{active}}\}.
\end{equation}
A binder sequence $y$ induces a sequence-conditioned generator $Q^{(y)}$ on $\mathcal{S}$, but neither $Q^{(y)}$ nor its transition rates are observed. Instead, functional labels specify the sign of the induced transition asymmetry:
\begin{equation}
\Delta(y) := Q^{(y)}(s_{\mathrm{inactive}}, s_{\mathrm{active}}) - Q^{(y)}(s_{\mathrm{active}}, s_{\mathrm{inactive}}).\nonumber
\end{equation}
 
We encode supervision using a direction label $d(y)\in\{+1,-1\}$ and a confidence weight $\kappa(y)\in[0,1]$:
\begin{equation}
\begin{aligned}
d(y) &= \begin{cases} 
+1, & a(y) \in \{\text{full agonist, partial agonist}\}, \\ 
-1, & a(y) = \text{antagonist}, 
\end{cases} \\[5pt]
\kappa(y) &= \begin{cases} 
1, & a(y) = \text{full agonist}, \\ 
\kappa_{\mathrm{part}}, & a(y) = \text{partial agonist}, \\ 
1, & a(y) = \text{antagonist}, \\ 
0, & a(y) = \text{negative}, 
\end{cases}\nonumber
\end{aligned}
\end{equation}
where $\kappa_{\mathrm{part}}\in(0,1)$ is a hyperparameter reflecting lower confidence in partial agonism.

\subsection{Direction Oracle}
\label{sec:dir_oracle}
 
We introduce a Direction Oracle $f_\phi: \mathcal{A}^L \times \mathcal{X} \to [-1, 1]$, parameterized by $\phi$, that predicts the direction of transition bias. The predicted direction is defined as:
\begin{equation}
\hat{d}(y) = \operatorname{sign}(f_\phi(y, x)).
\end{equation}
 
Given a target protein sequence $x$ and a peptide binder $y$, the representations are obtained through pre-trained encoders:
\begin{equation}
\mathbf{h}_x = \text{Pool}(\mathcal{E}_x(x)), \quad \mathbf{h}_y = \text{Pool}(\mathcal{E}_y(y)),
\end{equation}
where $\mathbf{h}_x \in \mathbb{R}^{d}$ and $\mathbf{h}_y \in \mathbb{R}^{d}$ denote the pooled embeddings for the target and binder. The fused representation is computed via a gated mechanism followed by an MLP:
\begin{equation}
\mathbf{z} = \mathbf{g} \odot \mathbf{h}_x + (1 - \mathbf{g}) \odot \mathbf{h}_y, \quad f_\phi(y, x) = \text{MLP}(\mathbf{z}),
\end{equation}
where $\mathbf{g} = \sigma(\mathbf{W}_g[\mathbf{h}_x; \mathbf{h}_y] + \mathbf{b}_g)$ is a learned gating vector and $\odot$ denotes element-wise multiplication.
 
The oracle minimizes a weighted binary classification loss:
\begin{align}
\mathcal{L}_{\mathrm{dir}}(\phi)
=
\mathbb{E}_{(x,y,d)\sim\mathcal{D}}
\left[
\kappa(y)\,
\log\!\left(1 + \exp(-d \cdot f_\phi(y, x))\right)
\right],\nonumber
\end{align}
where $d \in \{-1, +1\}$ denotes the ground-truth direction and $\kappa(y)$ is a sample-dependent weight.
 
\subsection{Contrastive Directional Representation}
\label{sec:contrastive}
 
Let $h_\theta(y)\in\mathbb{R}^m$ denote a sequence representation extracted from the MDLM by mean-pooling the final-layer hidden states across all sequence positions. To enforce separation between directional classes in representation space, we define positive and negative index sets:
\begin{align}
\mathcal{P} &= \{(i,j) : d(y_i)=d(y_j),\ \kappa(y_i)\kappa(y_j)>0\},\\
\mathcal{N} &= \{(i,j) : d(y_i)\neq d(y_j),\ \kappa(y_i)\kappa(y_j)>0\}.
\end{align}
A margin-based contrastive loss is defined as:
\begin{equation}
\begin{aligned}
\mathcal{L}_{\mathrm{ctr}}&(\theta)
=
\sum_{(i,j)\in\mathcal{P}}
\|h_\theta(y_i)-h_\theta(y_j)\|_2^2 \\
&+
\sum_{(i,j)\in\mathcal{N}}
\max\!\left(0,\, m - \|h_\theta(y_i)-h_\theta(y_j)\|_2\right)^2,
\end{aligned}
\end{equation}
where $m>0$ is a margin hyperparameter. Negative (non-binding) samples are excluded from this loss.
 
\subsection{Incorporating Target Binding Affinity via Gating}
\label{subsec:affinity_objective}
 
Directional allosteric control is only meaningful for sequences that actually bind to the target protein. To ensure that directional supervision is applied to plausible binders, we incorporate a pre-trained peptide-protein affinity predictor as a soft gate within the reward function.
 
Let $g_{\psi}(y, x) \in [0,1]$ denote a pre-trained affinity model that predicts the probability that peptide $y$ binds target protein $x$. Given a desired direction $d^\star \in \{+1,-1\}$, we define the gated reward as:
\begin{equation}
R(y; d^\star, x)
=
g_{\psi}(y, x)\cdot \sigma\left(\frac{d^\star \cdot f_\phi(y, x)}{\tau}\right),
\label{eq:gated_reward}
\end{equation}
where $\sigma$ denotes the sigmoid function and $\tau>0$ is a temperature coefficient. Intuitively, the gated reward assigns high scores only to sequences that both bind the target protein (high $g_\psi$) and bias state transitions in the desired direction (high $d^\star\cdot f_\phi$, i.e., $\mathrm{sign}(f_\phi)$ matches the requested $d^\star$). This formulation ensures that sequences predicted not to bind contribute negligible reward regardless of their directional score, while sequences predicted to bind are ranked according to their directional effect. Crucially, binding affinity acts as a \emph{gate} to filter implausible candidates rather than as a quantity to be maximized: stronger binders do not necessarily induce desired state changes, so directional control must be decoupled from raw affinity rather than traded off against it on a Pareto frontier.
 
The resulting reward-tilted target distribution becomes:
\begin{equation} 
p^\star(y \mid d^\star, x) \;\propto\; p_{\theta_0}(y)\, \exp\!\left( \frac{R(y; d^\star, x)}{\alpha} \right), 
\label{eq:tilted_affinity} 
\end{equation}
where $\alpha>0$ is the regularization coefficient controlling the strength of deviation from the pre-trained prior.
 
Following the trajectory-level importance weighting framework from Section~\ref{subsec:amortized_guidance}, the unnormalized log importance weight for a trajectory $y_{0:1}$ with final sequence $y = y_1$ is:
\begin{equation}
\begin{aligned}
\log \tilde{w}&(y_{0:1}) = \frac{R(y; d^\star, x)}{\alpha} \\
& + \sum_{n=1}^{T} \sum_{\ell: y_{t_{n-1}}^\ell \neq y_{t_n}^\ell} \log \frac{p_{\theta_0}(y_{t_{n-1}}^\ell \mid y_{t_n}^{\mathrm{UM}})}{p_{\bar{\theta}}(y_{t_{n-1}}^\ell \mid y_{t_n}^{\mathrm{UM}})},\label{eq:importance-weight2}
\end{aligned}
\end{equation}
where $p_{\theta_0}$ and $p_{\bar{\theta}}$ denote the pre-trained model and proposal policy. At inference, a target protein $x$ and direction $d^\star\in\{+1,-1\}$ (agonist/antagonist) are provided. These inputs condition the reward function alone, keeping the generative backbone target-agnostic at the architectural level. We emphasize that this does not mean the backbone is unaffected by the target or direction: information about $x$ and $d^\star$ enters the policy through the reward signal during fine-tuning (Eq.~\ref{eq:wdce}), which reweights the WDCE loss and thus updates $\theta$, analogous to how preference signals in RLHF shape a language-model policy without being supplied as architectural inputs at inference time.

\subsection{Amortized Fine-Tuning Objective}
\label{sec:wdce}
 
Direct sampling from $p^\star$ is intractable. We therefore learn a new parameterization $p_\theta(y)$ via amortized fine-tuning. Let $\mathcal{B}$ denote a replay buffer of sequences approximately sampled from $p^\star$ via tree search. The WDCE objective is:
\begin{equation}
\begin{aligned}
\mathcal{L}_{\mathrm{WDCE}}(\theta) =& \mathbb{E}_{y \sim \mathcal{B}} \; \mathbb{E}_{t, y_t \sim q_t(\cdot \mid y)} \Big[ w(y) \\
&\sum_{\ell : y_t^\ell = \boldsymbol{M}} -\log p_\theta (y_\ell \mid y_t^{\mathrm{UM}}, t) \Big],
\end{aligned}
\end{equation}
where $\mathcal{B}$ is the replay buffer of candidate sequences populated by trajectory-aware tree search, $q_t(\cdot \mid y)$ is the forward masking process applied to sample a partially corrupted $y_t$ at time $t$, $y_t^{\mathrm{UM}}$ is the unmasked context, and $w(y)$ is a per-sequence importance weight defined in Eq.~\ref{eq:importance-weight2}. Intuitively, this objective performs standard masked-token denoising on buffer samples, but reweights each sample by $w(y)$ so that high-reward sequences (good binders that match the desired direction) contribute more gradient signal than low-reward ones, internalizing the gated reward into the model's distribution. 
This softmax normalization over the buffer converts unnormalized log-weights into valid importance weights. Samples with $\kappa(y)=0$ (non-binders) contribute no gradient.
 
To prevent collapse and preserve the pre-trained prior, we include a regularization term:
\begin{equation}
\mathcal{L}_{\mathrm{reg}}(\theta)
=
\mathrm{KL}\!\left(p_\theta \,\|\, p_{\theta_0}\right).
\end{equation}
 
The full fine-tuning objective is:
\begin{equation}
\min_{\theta}
\;
\big\{\mathcal{L}_{\mathrm{WDCE}}(\theta)
+
\lambda_{\mathrm{ctr}}\,\mathcal{L}_{\mathrm{ctr}}(\theta)
+
\lambda_{\mathrm{reg}}\,\mathcal{L}_{\mathrm{reg}}(\theta)\big\},
\end{equation}
where $\lambda_{\mathrm{ctr}},\lambda_{\mathrm{reg}}\ge 0$ are hyperparameters. At generation time, users specify a target protein $x$ and a desired direction $d^\star\in\{+1,-1\}$ corresponding to agonist or antagonist behavior.
 
\subsection{Design Task}
\label{sec:design_task}
 
The directional allosteric design task is formally defined as:
 
\begin{tcolorbox}[colback=gray!10, colframe=gray!10, boxrule=0pt, arc=0pt, left=6pt, right=6pt, top=6pt, bottom=6pt]
    \textbf{Design Task:} Given a desired direction $d^\star$, generate binder sequences whose induced transition asymmetry biases function toward the specified direction, without regressing kinetic rates or stabilizing endpoint states.
\end{tcolorbox}
 
This formulation treats directionality of state transitions as the primary generative objective and defines a fully amortized procedure for incorporating coarse functional supervision into discrete sequence generation.

\section{Results}
\label{sec:results}
We designed experiments to test whether modeling binder action as a sequence-conditioned transition operator captures forms of allosteric control that are inaccessible to equilibrium- and structure-centric design methods. Our evaluation addresses three questions: (1) whether directionally fine-tuned generators induce non-reversible transition behavior; (2) whether directional control can be achieved independently of binding affinity; and (3) whether the framework supports targeted control over specific transition directions rather than global perturbations.

\subsection{Experimental Settings}

\paragraph{Dataset.}
We curated data from the IUPHAR/BPS Guide to Pharmacology database \cite{harding2026iuphar}, classifying entries as agonist or antagonist based on bioactivity. Interacting residues on ligands were extracted using PeptiDerive \cite{sedan2016peptiderive} and converted to canonical SMILES for downstream processing. To balance agonist and antagonist samples, we generated synthetic antagonist ligands via RFDiffusion \cite{watson2023rfdiffusion} for targets with known agonists. After redundancy removal using MMseqs2 \cite{steinegger2017mmseqs2}, the final dataset comprises 1,446 training and 336 held-out test samples.

The data split for TD3B training, validation, and test sets follows that of the Direction Oracle. Within the Direction Oracle training set, we further partitioned the data into training and validation subsets at an 8:1 ratio based on clustering. We filtered out binder sequences with residue counts outside the range of [16, 128], as well as targets containing only a single direction type (agonist or antagonist), to prevent direction bias toward specific targets. This preprocessing yields 130, 34, and 88 bidirectional target-binder pairs for the training, validation, and test sets, respectively.

\paragraph{Evaluation Metrics.}
For the Direction Oracle performance, we report Accuracy, Precision, Recall, and F1 Score to evaluate discriminative capability. To assess directional accuracy, inter-direction transitions, and designed binder affinity, we introduce direction-specific metrics for both agonist and antagonist modes: Affinity ($d^*=1$), Affinity ($d^*=-1$), Direction Accuracy ($d^*=1$), and Direction Accuracy ($d^*=-1$). Additionally, we report the gated reward to evaluate overall model training effectiveness

\paragraph{Setup.}
We adopt the pre-trained MDLM weights from PepTune \cite{tang2025peptune} as our base model. During finetuning, we employ tree search for buffer generation following TR2-D2 \cite{tang2025tr2d2}, regenerating binder data for multiple targets every $k$ iterations with a first-in-first-out buffer to mitigate catastrophic forgetting. Binding affinity is estimated via a pre-trained predictor \cite{tang2025peptune, zhang2026peptiverse}, trained on the PepLand dataset \cite{zhang2025pepland} to produce a continuous, normalized affinity score (combining K\textsubscript{d}, K\textsubscript{i}, and IC\textsubscript{50}), where higher values indicate stronger binding and a one-unit increase corresponds to an approximate tenfold change in binding strength. We note that RFDiffusion is used in this paper only as an external baseline (Sec.~\ref{subsec:directional_asymmetry}) and for synthetic-antagonist data augmentation; it is \emph{not} part of the TD3B inference pipeline, and TD3B applies no docking or structural filtering to its generated sequences. Additional details on ground-truth affinity sources and the structural inputs used for the RFDiffusion baseline are provided in Appendix~\ref{app:rfdiffusion_pdb}. For the Direction Oracle, protein target sequences are encoded using ESM2 \cite{lin2023evolutionary}, and binder sequences are tokenized using the SPE tokenizer \cite{tang2025tr2d2}. During generation, we apply Algorithm \ref{alg:InferenceDirectionalMDLM} for weighted sampling, generating 8 samples per direction for each target.

\subsection{Direction Oracle for Protein-Peptide Binding}
To ensure sufficient exploration space for both tree search and model training, we first demonstrate that the Direction Oracle can accurately identify the directionality of binders across diverse targets and varying sequence lengths, thereby providing reliable guidance during training. As shown in Table \ref{tab:train_test_results}, the Direction Oracle achieves strong classification performance across all metrics.

\begin{table}[ht]
\centering
\caption{Binary classification performance of the Direction Oracle.}
\label{tab:train_test_results}
\resizebox{\linewidth}{!}{
\begin{tabular}{lcccc}
\toprule
 & Accuracy & Precision & Recall & F1 \\
\midrule
Dir. Oracle & 0.93 & 0.90 & 0.91 & 0.90 \\
\bottomrule
\end{tabular}
}
\end{table}

\begin{figure}[t]
    \centering
    \includegraphics[width=0.95\linewidth]{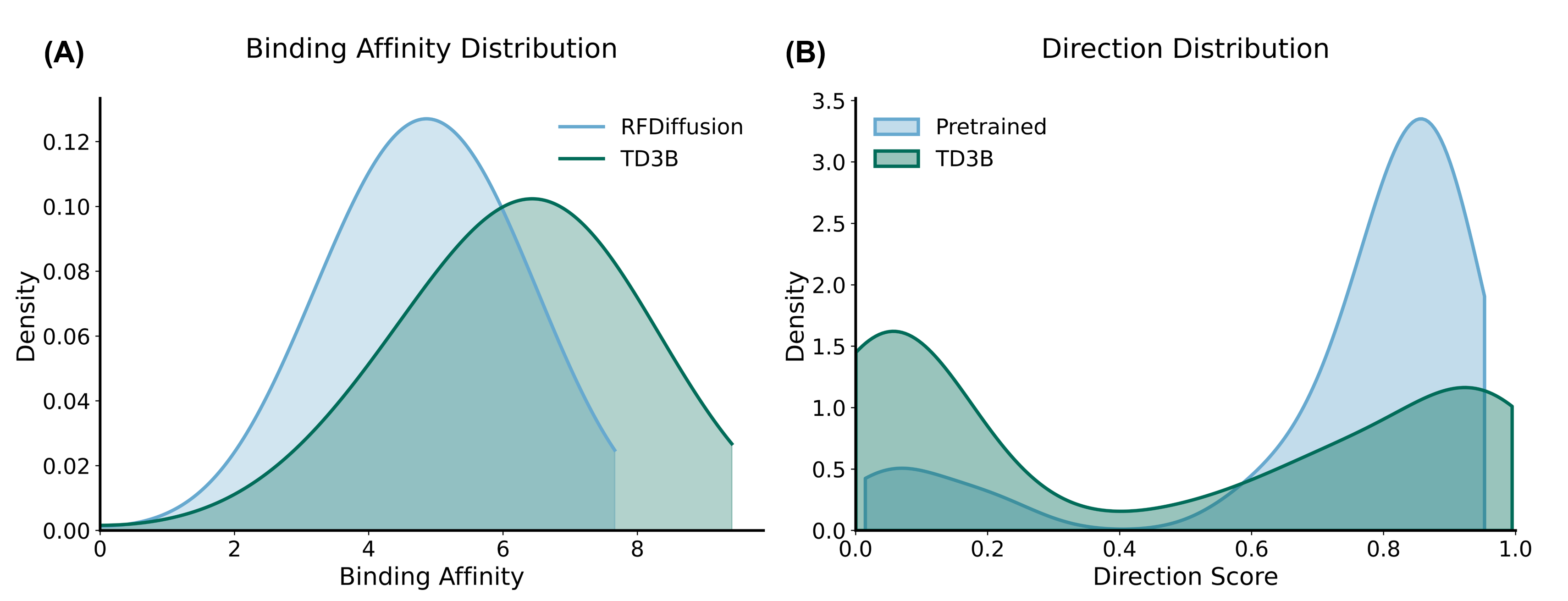} 
    \caption{Direction and affinity distribution of TD3B. \textbf{(A)} Binding affinity comparison vs. RFDiffusion. \textbf{(B)} Direction comparison vs. pre-trained PepTune.}
    \label{fig:exp_dir_aff}
\end{figure}
\subsection{Assessing Directional Asymmetry of Learned Transition Operators}
\label{subsec:directional_asymmetry}

Binding affinity is a prerequisite for characterizing agonists and antagonists \cite{brunton2018goodman}. To validate TD3B, we compared the predicted binding affinity of TD3B-generated samples against those designed by structure-based RFDiffusion \cite{watson2023rfdiffusion}. Figure \ref{fig:exp_dir_aff}A shows TD3B achieves higher predicted normalized affinity, demonstrating the effectiveness of gated reward fine-tuning.

We evaluated the directional behavior of peptides generated under agonist- versus antagonist-directed finetuning by measuring samples from the pre-trained (unconditioned) generator and TD3B using the Direction Oracle. As shown in Figure \ref{fig:exp_dir_aff}B, the pre-trained model generates predominantly agonist-biased binders with low confidence and lacks directional control. In contrast, TD3B produces distributions with higher confidence across both directions and enables explicit control over transition directionality, though we note asymmetry remains (see Table~\ref{tab:model_comparison}).

\subsection{Comparison to Static and Predictive Baselines}
\label{subsec:baseline_comparison}
We test whether TD3B can break the directional symmetry between activation and inhibition. Since pre-trained discrete diffusion models lack direction conditioning, we compare training-free guided diffusion against tree-search-based finetuning. Table~\ref{tab:model_comparison} shows TD3B achieves the highest gated reward, confirming its effectiveness for direction-specific generation. Direction accuracy is asymmetric across agonist and antagonist modes, reflecting the stronger antagonist signal in training data and the pre-trained model's inherent agonist bias. Finetuning-based methods achieve higher affinity and better directional balance than guidance-only approaches. Compared to TR2-D2, TD3B adds weighted sampling to exploit high-potential samples and a contrastive loss to separate distributions in latent space, enabling directional understanding beyond iterative optimization. Ablations on loss components and weighted sampling are in Appendix~\ref{app:ablation}.

As shown in Table~\ref{tab:model_comparison}, both $\mathcal{L}_{\mathrm{ctr}}$ and $\mathcal{L}_{\mathrm{reg}}$ are necessary. Without $\mathcal{L}_{\mathrm{ctr}}$, agonist and antagonist accuracies collapse to nearly identical values, indicating that the latent space no longer separates the two directions. Without $\mathcal{L}_{\mathrm{reg}}$, the model drifts toward the agonist-biased pre-trained prior (Figure~\ref{fig:exp_dir_aff}B): agonist accuracy rises while antagonist accuracy becomes unstable across seeds. The KL term thus preserves the distributional capacity needed to reach the antagonist mode, which lies further from the prior. Per-target analysis (Appendix~\ref{app:per_target}) further shows that remaining agonist failures cluster on a small number of targets with limited training signal, suggesting asymmetric loss weighting as a natural future direction.

\begin{table}[h]
\centering
\caption{Comparison of model performance across various metrics. Best results are in bold. CG: Classifier Guidance; SMC: Sequential Monte Carlo; TDS: Twisted Diffusion Sampler~\cite{wu2023practical}.}
\label{tab:model_comparison}
\resizebox{\linewidth}{!}{
\begin{tabular}{lccccc}
\toprule
Method & Aff($d^{*}{=}+1$) & Aff($d^{*}{=}-1$) & DA($d^{*}{=}+1$) & DA($d^{*}{=}-1$) & Gated \\
\midrule
Pre-trained & 4.80{\tiny$\pm$0.01} & 4.82{\tiny$\pm$0.01} & 0.822{\tiny$\pm$0.010} & 0.174{\tiny$\pm$0.010} & 2.06{\tiny$\pm$0.32} \\
CG          & 4.80{\tiny$\pm$0.00} & 4.81{\tiny$\pm$0.01} & 0.816{\tiny$\pm$0.005} & 0.176{\tiny$\pm$0.014} & 2.38{\tiny$\pm$0.02} \\
SMC         & 4.83{\tiny$\pm$0.01} & 4.78{\tiny$\pm$0.01} & 0.879{\tiny$\pm$0.011} & 0.470{\tiny$\pm$0.011} & 3.18{\tiny$\pm$0.31} \\
TDS         & 4.82{\tiny$\pm$0.02} & 4.79{\tiny$\pm$0.02} & 0.841{\tiny$\pm$0.032} & 0.235{\tiny$\pm$0.028} & 2.42{\tiny$\pm$0.07} \\
PepTune & 5.02{\tiny$\pm$0.09} & 5.00{\tiny$\pm$0.08} & 0.618{\tiny$\pm$0.230} & 0.437{\tiny$\pm$0.142} & 2.61{\tiny$\pm$0.28} \\
TR2-D2      & 5.89{\tiny$\pm$0.15} & 5.89{\tiny$\pm$0.19} & 0.178{\tiny$\pm$0.127} & 0.875{\tiny$\pm$0.053} & 3.36{\tiny$\pm$0.24} \\
\midrule
TD3B w/o $\mathcal{L}_{\mathrm{ctr}}$ & 5.33{\tiny$\pm$0.48} & 5.35{\tiny$\pm$0.61} & 0.788{\tiny$\pm$0.190}& 0.788{\tiny$\pm$0.196} & 4.01{\tiny$\pm$0.10} \\
TD3B w/o $\mathcal{L}_{\mathrm{reg}}$ & 5.28{\tiny$\pm$0.47} & 5.16{\tiny$\pm$0.26} & \textbf{0.938{\tiny$\pm$0.010}} & 0.471{\tiny$\pm$0.168} & 3.84{\tiny$\pm$0.14} \\
\textbf{TD3B (Ours)} & \textbf{6.00{\tiny$\pm$0.02}} & \textbf{6.32{\tiny$\pm$0.02}} & 0.795{\tiny$\pm$0.017} & \textbf{1.000{\tiny$\pm$0.000}} & \textbf{5.33{\tiny$\pm$0.03}} \\
\bottomrule
\end{tabular}
}
\end{table}

\subsection{Targeted Control of Transition Asymmetry}
\label{subsec:targeted_control}
We next assessed TD3B’s targeted control over specific transitions without global dynamic perturbation. Conditioning on $d^\star \in \{+1,-1\}$, we verified that generated binders selectively bias transitions while maintaining high affinity. Table \ref{tab:targeted_control} shows \textit{de novo} TD3B binders outperform length-matched wild-type references in affinity across all directions. Success rate, defined as the fraction of binders achieving both (i) superior predicted affinity to wild-type and (ii) correct Direction Oracle classification, reaches 61\% and 100\% for forward and backward transitions, respectively. This confirms directionality as a controllable objective rather than an incidental outcome of binding optimization.

\begin{table}[h]
\centering
\caption{Targeted control: Evaluation of functional specificity across transition objectives.}
\label{tab:targeted_control}
\resizebox{\columnwidth}{!}{
    \begin{tabular}{lccc}
    \toprule
    \textbf{Design Objective ($d^*$)} & \textbf{\makecell[c]{Affinity \\ (WT)}} & \textbf{\makecell[c]{Affinity \\ (Transition)}} & \textbf{\makecell[c]{Success \\ Rate}} \\
    \midrule
    Forward Transition ($d^*=1$) & 4.66 & 5.81 & 0.61 \\
    Reverse Transition ($d^*=-1$) & 4.99 & 6.06 & 1.00 \\
    \bottomrule
    \end{tabular}
}
\end{table}

\subsection{Case Studies}
\label{subsec:case_studies}

We fine-tuned TD3B on two key GPCR targets (GLP1R and TAAR1) and selected top-ranked candidates for each. Complex structures were predicted using AlphaFold3, followed by binding-site detection and scoring with an external classifier. We first focused on GLP-1R, a clinically important metabolic receptor revolutionary for obesity, weight loss, and type 2 diabetes treatment where agonist activity is the primary therapeutic mechanism \cite{moiz2025expanding}. As reference, Figure~\ref{fig:panel_GLP1R}A shows the AF3-predicted structure of GLP-1R with all 37 interacting residues at the binding interface of the endogenous GLP-1 hormone. TD3B-designed agonists engage key activation residues including Tyr148, Tyr152, Arg190, Lys197, Tyr205, Gln234, Trp297, Thr298, Arg299, and Asn300 \cite{Liao2022invest}, with Arg299 and Asn300 being essential for full agonist activity \cite{lei2018two} (Figure~\ref{fig:panel_GLP1R}B). In contrast, TD3B-designed antagonists (Figure~\ref{fig:panel_GLP1R}C) lack interactions with Arg299 and Asn300, consistent with their inability to activate the receptor. These results demonstrate that TD3B can design agonist and antagonist binders that selectively engage or avoid critical activation residues on the same receptor. 
\begin{figure}[H]
    \centering
    \includegraphics[width=\linewidth]{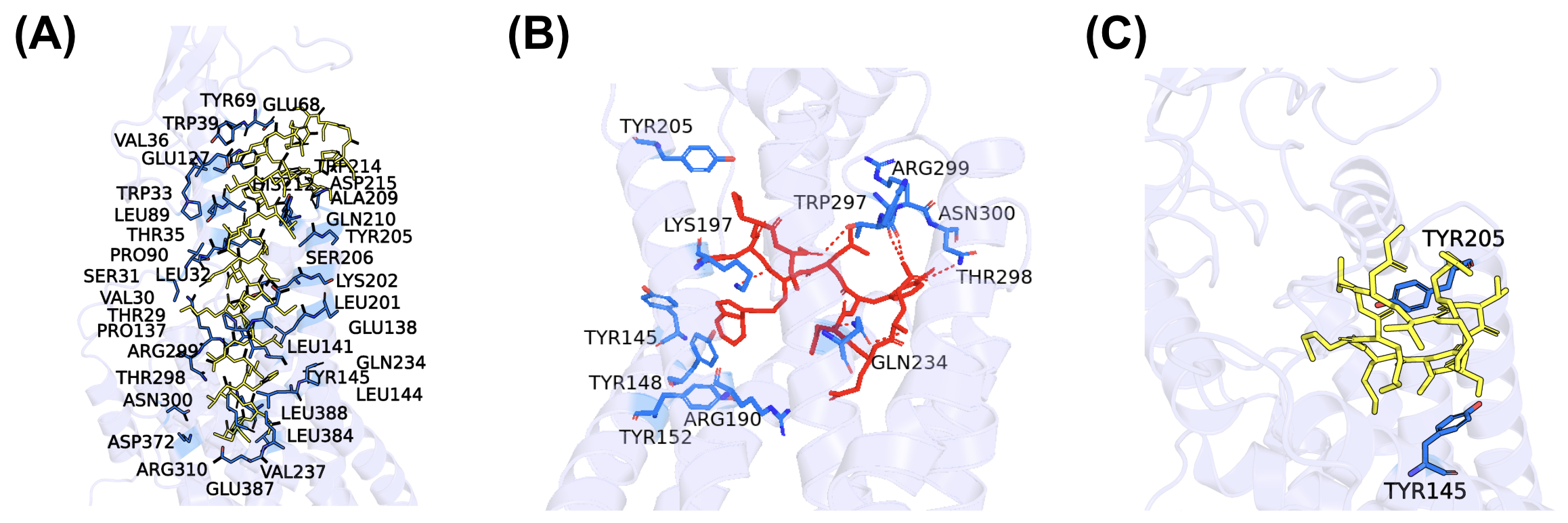} 
    \caption{Evaluation of TD3B on GLP-1R. \textbf{(A)} Existing GLP-1 peptide hormone bound to GLP-1R. \textbf{(B)} TD3B-designed agonist bound GLP-1R. \textbf{(C)} TD3B-designed antagonist bound GLP-1R.}
    \label{fig:panel_GLP1R}
\end{figure}
We next applied TD3B to the Orexin 1 Receptor (OX1R), a regulatory GPCR where antagonists treat insomnia and agonists show promise for narcolepsy \citep{scammell2011orexin,nishino2000hypocretin}. Figure~\ref{fig:panel_OX1R}A shows the crystal structure of OX1R bound to the clinical antagonist suvorexant (PDB: 4ZJ8), revealing 14 key orthosteric residues. TD3B-designed agonists engage 5 conserved residues, while antagonists engage 6 residues (Figures~\ref{fig:panel_OX1R}B and C). Both ligands contact GLN126, a molecular switch that controls activation through hydrogen bonding with TYR348 \citep{karhu2019determinants}. The TD3B-designed antagonist engages 6 conserved residues, including GLN126, a contact pattern consistent with inactive-state stabilization by suvorexant (PDB: 4ZJ8) \citep{Yin2016Structure}. The TD3B-designed agonist engages 5 residues with a distinct pattern that does \emph{not} reinforce this inactive-state geometry, potentially permitting the conformational flexibility required for activation. We acknowledge that this interpretation is based on static AlphaFold3-predicted complex structures; definitive mechanistic conclusions about activation will require molecular dynamics simulations and experimental mutagenesis, which we identify as a priority for future work. The substantial conservation with suvorexant's validated binding site nonetheless confirms orthosteric targeting and demonstrates TD3B's ability to design functionally distinct ligands for the same receptor pocket. Together, these results indicate that TD3B can generate functionally divergent binders for the same target by modulating transition directionality rather than binding affinity alone. More cases are shown in Appendix~\ref{app:case}.

\begin{figure}[h]
    \centering
    \includegraphics[width=\linewidth]{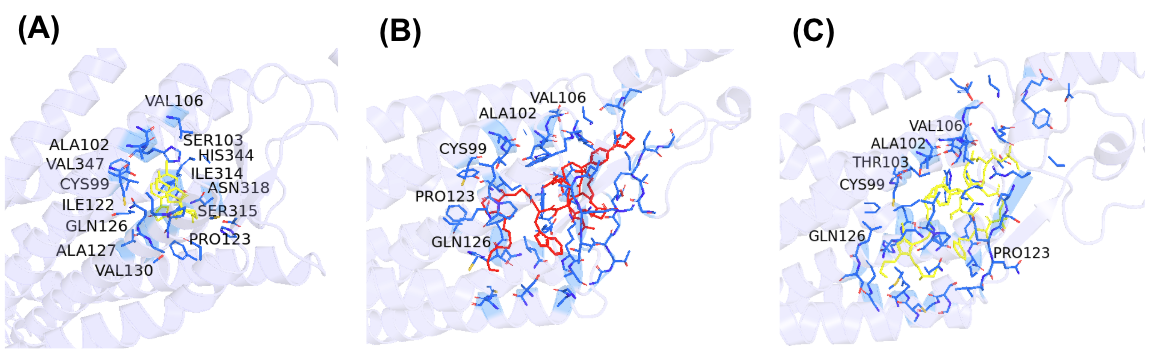} 
    \caption{Evaluation of TD3B on OX1R. \textbf{(A)} Suvorexant bound to OX1R (PDB: 4ZJ8). \textbf{(B)} TD3B-designed agonist. \textbf{(C)} TD3B-designed antagonist.}
    \label{fig:panel_OX1R}
\end{figure}

\section{Discussion}

We introduce \textbf{TD3B}, a generative framework that formulates allosteric binder design through control over sequence-conditioned transition operators instead of optimization toward static states or equilibrium energies. By making directional asymmetry and non-reversibility explicit modeling targets, TD3B captures functional effects that structure-centric design algorithms and predictive dynamics models fail to address. Specifically, the transition-operator formalism in Sections~\ref{sec:macrostates}-\ref{sec:dir_asym} supplies a theoretical foundation without adding a parameterized component. TD3B never regresses the operator $Q^{(y)}$ and uses supervision that captures only the binary sign of $\Delta(y)$, which sidesteps the need for high-resolution kinetic measurements that remain difficult to obtain at scale. This perspective generalizes across modalities, supporting generative design wherever function arises from non-equilibrium state shifts. Our natural next steps include richer continuous-rate supervision from Markov state models, multi-state generalizations that capture biased agonism, and wet-lab validation on representative GPCR targets such as GLP-1R and OX1R through functional assays like cAMP and $\beta$-arrestin recruitment. Together, these directions point toward generative models that learn to shape, instead of merely sample from, the functional dynamics of biomolecular systems.

\section*{Acknowledgments}
This research was supported by a grant from the High-throughput Institute for Discovery (HIT-ID) at the University of Pennsylvania to the lab of Pranam Chatterjee. The work described in this paper was also partially supported by the Research Grants Council of the Hong Kong Special Administrative Region, China, under Project T45-401/22-N.

\section*{Impact Statement}
This work develops a computational framework for generating protein-binding peptides with controlled agonist or antagonist behavior, which may support therapeutic discovery efforts where functional directionality is critical, such as receptor activation or inhibition. By separating binding from functional effect, the approach aims to reduce late-stage failures associated with ligands that bind but produce unintended signaling outcomes. At the same time, generative models for bioactive molecules carry risks, including off-target interactions or unanticipated biological activity. The methods presented here are intended as hypothesis-generation tools and require careful experimental validation, target-specific safety assessment, and responsible use before any therapeutic application.

\bibliography{pbg_template}

@inproceedings{
tang2025peptune,
title={PepTune: De Novo Generation of Therapeutic Peptides with Multi-Objective-Guided Discrete Diffusion},
author={Sophia Tang and Yinuo Zhang and Pranam Chatterjee},
booktitle={Forty-second International Conference on Machine Learning},
year={2025},
url={https://openreview.net/forum?id=FQoy1Y1Hd8}
}

@article{lei2018two,
  title={Two distinct domains of the glucagon-like peptide-1 receptor control peptide-mediated biased agonism},
  author={Lei, Saifei and Clydesdale, Lachlan and Dai, Antao and Cai, Xiaoqing and Feng, Yang and Yang, Dehua and Liang, Yi-Lynn and Koole, Cassandra and Zhao, Peishen and Coudrat, Thomas and others},
  journal={Journal of Biological Chemistry},
  volume={293},
  number={24},
  pages={9370--9387},
  year={2018},
  publisher={Elsevier}
}

@article{moiz2025expanding,
  title={The expanding role of GLP-1 receptor agonists: a narrative review of current evidence and future directions},
  author={Moiz, Areesha and Filion, Kristian B and Tsoukas, Michael A and Yu, Oriana HY and Peters, Tricia M and Eisenberg, Mark J},
  journal={EClinicalMedicine},
  volume={86},
  year={2025},
  publisher={Elsevier}
}

@article{zhang2025pepland,
  title={PepLand: a large-scale pre-trained peptide representation model for a comprehensive landscape of both canonical and non-canonical amino acids},
  author={Zhang, Ruochi and Wu, Haoran and Liu, Chang and Yang, Qian and Xiu, Yuting and Li, Kewei and Chen, Ningning and Wang, Yu and Wang, Yan and Gao, Xin and others},
  journal={Briefings in bioinformatics},
  volume={26},
  number={4},
  pages={bbaf367},
  year={2025},
  publisher={Oxford University Press}
}

@article{zhang2026peptiverse,
  title={PeptiVerse: A Unified Platform for Therapeutic Peptide Property Prediction},
  author={Zhang, Yinuo and Tang, Sophia and Chen, Tong and Mahood, Elizabeth and Vincoff, Sophia and Chatterjee, Pranam},
  journal={bioRxiv},
  pages={2025--12},
  year={2026},
  publisher={Cold Spring Harbor Laboratory}
}

@article{tang2025tr2d2,
  title={TR2-D2: Tree Search Guided Trajectory-Aware Fine-Tuning for Discrete Diffusion},
  author={Tang, Sophia and Zhu, Yuchen and Tao, Molei and Chatterjee, Pranam},
  journal={arXiv preprint arXiv:2509.25171},
  year={2025}
}

@article{chen2025areuredi,
  title={AReUReDi: Annealed Rectified Updates for Refining Discrete Flows with Multi-Objective Guidance},
  author={Chen, Tong and Zhang, Yinuo and Chatterjee, Pranam},
  journal={arXiv preprint arXiv:2510.00352},
  year={2025}
}

@inproceedings{
wang2025finetuning,
title={Fine-Tuning Discrete Diffusion Models via Reward Optimization with Applications to {DNA} and Protein Design},
author={Chenyu Wang and Masatoshi Uehara and Yichun He and Amy Wang and Avantika Lal and Tommi Jaakkola and Sergey Levine and Aviv Regev and Hanchen and Tommaso Biancalani},
booktitle={The Thirteenth International Conference on Learning Representations},
year={2025},
url={https://openreview.net/forum?id=G328D1xt4W}
}

@inproceedings{
cao2025glide,
title={{GLID}\${\textasciicircum}2\$E: A Gradient-Free Lightweight Fine-tune Approach for Discrete Biological Sequence Design},
author={Hanqun Cao and Haosen Shi and Chenyu Wang and Sinno Jialin Pan and Pheng-Ann Heng},
booktitle={The Thirty-ninth Annual Conference on Neural Information Processing Systems},
year={2025},
url={https://openreview.net/forum?id=AHjspi4R22}
}

@inproceedings{lou2024discrete,
  title={Discrete diffusion modeling by estimating the ratios of the data distribution},
  author={Lou, Aaron and Meng, Chenlin and Ermon, Stefano},
  booktitle={Proceedings of the 41st International Conference on Machine Learning},
  pages={32819--32848},
  year={2024}
}

@inproceedings{
vincoff2025soapia,
title={{SOAPIA}: Siamese-Guided Generation of Off Target-Avoiding Protein Interactions with High Target Affinity},
author={Sophia Vincoff and Oscar Davis and Ismail Ilkan Ceylan and Alexander Tong and Joey Bose and Pranam Chatterjee},
booktitle={ICML 2025 Workshop on Scaling Up Intervention Models},
year={2025},
url={https://openreview.net/forum?id=j0OpIG7leX}
}

@inproceedings{
chen2025mogdfm,
title={Multi-Objective-Guided Discrete Flow Matching for Controllable Biological Sequence Design},
author={Tong Chen and Yinuo Zhang and Sophia Tang and Pranam Chatterjee},
booktitle={ICML 2025 Generative AI and Biology (GenBio) Workshop},
year={2025},
url={https://openreview.net/forum?id=8YIMLoHP9J}
}

@inproceedings{
tang2025gumbelfm,
title={Gumbel-Softmax Score and Flow Matching for Discrete Biological Sequence Generation},
author={Sophia Tang and Yinuo Zhang and Alexander Tong and Pranam Chatterjee},
booktitle={ICLR 2025 Workshop on AI for Nucleic Acids},
year={2025},
url={https://openreview.net/forum?id=ITpCmDhSfu}
}

@article{watson2023rfdiffusion,
  title={De novo design of protein structure and function with RFdiffusion},
  author={Watson, Joseph L and Juergens, David and Bennett, Nathaniel R and Trippe, Brian L and Yim, Jason and Eisenach, Helen E and Ahern, Woody and Borst, Andrew J and Ragotte, Robert J and Milles, Lukas F and others},
  journal={Nature},
  volume={620},
  number={7976},
  pages={1089--1100},
  year={2023},
  publisher={Nature Publishing Group UK London}
}

@article{pacesa2025bindcraft,
  title={One-shot design of functional protein binders with BindCraft},
  author={Pacesa, Martin and Nickel, Lennart and Schellhaas, Christian and Schmidt, Joseph and Pyatova, Ekaterina and Kissling, Lucas and Barendse, Patrick and Choudhury, Jagrity and Kapoor, Srajan and Alcaraz-Serna, Ana and others},
  journal={Nature},
  pages={1--10},
  year={2025},
  publisher={Nature Publishing Group UK London}
}

@article{stark2025boltzgen,
  title={BoltzGen: Toward Universal Binder Design},
  author={Stark, Hannes and Faltings, Felix and Choi, MinGyu and Xie, Yuxin and Hur, Eunsu and O'Donnell, Timothy John and Bushuiev, Anton and U{\c{c}}ar, Talip and Passaro, Saro and Mao, Weian and others},
  journal={bioRxiv},
  pages={2025--11},
  year={2025},
  publisher={Cold Spring Harbor Laboratory}
}

@article{sahoo2024mdlm,
  title={Simple and effective masked diffusion language models},
  author={Sahoo, Subham and Arriola, Marianne and Schiff, Yair and Gokaslan, Aaron and Marroquin, Edgar and Chiu, Justin and Rush, Alexander and Kuleshov, Volodymyr},
  journal={Advances in Neural Information Processing Systems},
  volume={37},
  pages={130136--130184},
  year={2024}
}

@article{shi2024mdlm,
  title={Simplified and generalized masked diffusion for discrete data},
  author={Shi, Jiaxin and Han, Kehang and Wang, Zhe and Doucet, Arnaud and Titsias, Michalis},
  journal={Advances in neural information processing systems},
  volume={37},
  pages={103131--103167},
  year={2024}
}

@article{austin2021d3pm,
  title={Structured denoising diffusion models in discrete state-spaces},
  author={Austin, Jacob and Johnson, Daniel D and Ho, Jonathan and Tarlow, Daniel and Van Den Berg, Rianne},
  journal={Advances in neural information processing systems},
  volume={34},
  pages={17981--17993},
  year={2021}
}

@article{pillai2024novo,
  title={De novo design of allosterically switchable protein assemblies},
  author={Pillai, Arvind and Idris, Abbas and Philomin, Annika and Weidle, Connor and Skotheim, Rebecca and Leung, Philip JY and Broerman, Adam and Demakis, Cullen and Borst, Andrew J and Praetorius, Florian and others},
  journal={Nature},
  volume={632},
  number={8026},
  pages={911--920},
  year={2024},
  publisher={Nature Publishing Group UK London}
}

@article{lu2019allosteric,
  title={Allosteric modulator discovery: from serendipity to structure-based design},
  author={Lu, Shaoyong and He, Xinheng and Ni, Duan and Zhang, Jian},
  journal={Journal of medicinal chemistry},
  volume={62},
  number={14},
  pages={6405--6421},
  year={2019},
  publisher={ACS Publications}
}

@article{monod1965nature,
  title={On the nature of allosteric transitions: a plausible model},
  author={Monod, Jacque and Wyman, Jeffries and Changeux, Jean-Pierre},
  journal={Journal of molecular biology},
  volume={12},
  number={1},
  pages={88--118},
  year={1965},
  publisher={Academic Press}
}

@article{churchfield2016novo,
  title={De novo design of an allosteric metalloprotein assembly with strained disulfide bonds},
  author={Churchfield, Lewis A and Medina-Morales, Annette and Brodin, Jeffrey D and Perez, Alfredo and Tezcan, F Akif},
  journal={Journal of the American Chemical Society},
  volume={138},
  number={40},
  pages={13163--13166},
  year={2016},
  publisher={ACS Publications}
}

@article{pirro2020allosteric,
  title={Allosteric cooperation in a de novo-designed two-domain protein},
  author={Pirro, Fabio and Schmidt, Nathan and Lincoff, James and Widel, Zachary X and Polizzi, Nicholas F and Liu, Lijun and Therien, Michael J and Grabe, Michael and Chino, Marco and Lombardi, Angela and others},
  journal={Proceedings of the National Academy of Sciences},
  volume={117},
  number={52},
  pages={33246--33253},
  year={2020},
  publisher={National Academy of Sciences}
}

@article{huang2013allosite,
  title={Allosite: a method for predicting allosteric sites},
  author={Huang, Wenkang and Lu, Shaoyong and Huang, Zhimin and Liu, Xinyi and Mou, Linkai and Luo, Yu and Zhao, Yanlong and Liu, Yaqin and Chen, Zhongjie and Hou, Tingjun and others},
  journal={Bioinformatics},
  volume={29},
  number={18},
  pages={2357--2359},
  year={2013},
  publisher={Oxford University Press}
}

@misc{akbar2018allo,
  title={ALLO: A tool to discriminate and prioritize allosteric pockets},
  author={Akbar, Rahmad and Helms, Volkhard},
  journal={Chemical Biology \& Drug Design},
  volume={91},
  number={4},
  pages={845--853},
  year={2018},
  publisher={Wiley Online Library}
}

@article{tan2019allomaps,
  title={AlloMAPS: allosteric mutation analysis and polymorphism of signaling database},
  author={Tan, Zhen Wah and Tee, Wei-Ven and Guarnera, Enrico and Booth, Lauren and Berezovsky, Igor N},
  journal={Nucleic acids research},
  volume={47},
  number={D1},
  pages={D265--D270},
  year={2019},
  publisher={Oxford University Press}
}

@article{shukla2014activation,
  title={Activation pathway of Src kinase reveals intermediate states as targets for drug design},
  author={Shukla, Diwakar and Meng, Yilin and Roux, Beno{\^\i}t and Pande, Vijay S},
  journal={Nature communications},
  volume={5},
  number={1},
  pages={3397},
  year={2014},
  publisher={Nature Publishing Group UK London}
}

@article{bowman2015discovery,
  title={Discovery of multiple hidden allosteric sites by combining Markov state models and experiments},
  author={Bowman, Gregory R and Bolin, Eric R and Hart, Kathryn M and Maguire, Brendan C and Marqusee, Susan},
  journal={Proceedings of the National Academy of Sciences},
  volume={112},
  number={9},
  pages={2734--2739},
  year={2015},
  publisher={National Academy of Sciences}
}

@article{koshland1966comparison,
  title={Comparison of experimental binding data and theoretical models in proteins containing subunits},
  author={Koshland Jr, Daniel E and N{\'e}methy, George and Filmer, David},
  journal={Biochemistry},
  volume={5},
  number={1},
  pages={365--385},
  year={1966},
  publisher={ACS Publications}
}

@article{halabi2009protein,
  title={Protein sectors: evolutionary units of three-dimensional structure},
  author={Halabi, Najeeb and Rivoire, Olivier and Leibler, Stanislas and Ranganathan, Rama},
  journal={Cell},
  volume={138},
  number={4},
  pages={774--786},
  year={2009},
  publisher={Elsevier}
}

@article{shen2016asd,
    author = {Shen, Qiancheng and Wang, Guanqiao and Li, Shuai and Liu, Xinyi and Lu, Shaoyong and Chen, Zhongjie and Song, Kun and Yan, Junhao and Geng, Lv and Huang, Zhimin and Huang, Wenkang and Chen, Guoqiang and Zhang, Jian},
    title = {ASD v3.0: unraveling allosteric regulation with structural mechanisms and biological networks},
    journal = {Nucleic Acids Research},
    volume = {44},
    number = {D1},
    pages = {D527-D535},
    year = {2016},
    month = {01},
    issn = {0305-1048},
    doi = {10.1093/nar/gkv902},
    url = {https://doi.org/10.1093/nar/gkv902},
    eprint = {https://academic.oup.com/nar/article-pdf/44/D1/D527/16661937/gkv902.pdf},
}

@article{clarke2016identifying,
  title={Identifying allosteric hotspots with dynamics: application to inter-and intra-species conservation},
  author={Clarke, Declan and Sethi, Anurag and Li, Shantao and Kumar, Sushant and Chang, Richard WF and Chen, Jieming and Gerstein, Mark},
  journal={Structure},
  volume={24},
  number={5},
  pages={826--837},
  year={2016},
  publisher={Elsevier}
}

@article{dauparas2022robust,
  title={Robust deep learning--based protein sequence design using ProteinMPNN},
  author={Dauparas, Justas and Anishchenko, Ivan and Bennett, Nathaniel and Bai, Hua and Ragotte, Robert J and Milles, Lukas F and Wicky, Basile IM and Courbet, Alexis and de Haas, Rob J and Bethel, Neville and others},
  journal={Science},
  volume={378},
  number={6615},
  pages={49--56},
  year={2022},
  publisher={American Association for the Advancement of Science}
}

@article{ho2022classifier,
  title={Classifier-free diffusion guidance},
  author={Ho, Jonathan and Salimans, Tim},
  journal={arXiv preprint arXiv:2207.12598},
  year={2022}
}

@article{dhariwal2021diffusion,
  title={Diffusion models beat gans on image synthesis},
  author={Dhariwal, Prafulla and Nichol, Alexander},
  journal={Advances in neural information processing systems},
  volume={34},
  pages={8780--8794},
  year={2021}
}

@article{nisonoff2024unlocking,
  title={Unlocking guidance for discrete state-space diffusion and flow models},
  author={Nisonoff, Hunter and Xiong, Junhao and Allenspach, Stephan and Listgarten, Jennifer},
  journal={arXiv preprint arXiv:2406.01572},
  year={2024}
}

@article{chung2022diffusion,
  title={Diffusion posterior sampling for general noisy inverse problems},
  author={Chung, Hyungjin and Kim, Jeongsol and Mccann, Michael T and Klasky, Marc L and Ye, Jong Chul},
  journal={arXiv preprint arXiv:2209.14687},
  year={2022}
}

@article{wu2023practical,
  title={Practical and asymptotically exact conditional sampling in diffusion models},
  author={Wu, Luhuan and Trippe, Brian and Naesseth, Christian and Blei, David and Cunningham, John P},
  journal={Advances in Neural Information Processing Systems},
  volume={36},
  pages={31372--31403},
  year={2023}
}

@inproceedings{dou2024diffusion,
  title={Diffusion posterior sampling for linear inverse problem solving: A filtering perspective},
  author={Dou, Zehao and Song, Yang},
  booktitle={The Twelfth International Conference on Learning Representations},
  year={2024}
}

@InProceedings{pmlr-v235-phillips24a,
  title = 	 {Particle Denoising Diffusion Sampler},
  author =       {Phillips, Angus and Dau, Hai-Dang and Hutchinson, Michael John and De Bortoli, Valentin and Deligiannidis, George and Doucet, Arnaud},
  booktitle = 	 {Proceedings of the 41st International Conference on Machine Learning},
  pages = 	 {40688--40724},
  year = 	 {2024},
  editor = 	 {Salakhutdinov, Ruslan and Kolter, Zico and Heller, Katherine and Weller, Adrian and Oliver, Nuria and Scarlett, Jonathan and Berkenkamp, Felix},
  volume = 	 {235},
  series = 	 {Proceedings of Machine Learning Research},
  month = 	 {21--27 Jul},
  publisher =    {PMLR},
  url = 	 {https://proceedings.mlr.press/v235/phillips24a.html}
}

@article{steinegger2017mmseqs2,
  title={MMseqs2 enables sensitive protein sequence searching for the analysis of massive data sets},
  author={Steinegger, Martin and S{\"o}ding, Johannes},
  journal={Nature biotechnology},
  volume={35},
  number={11},
  pages={1026--1028},
  year={2017},
  publisher={Nature Publishing Group US New York}
}

@article{lin2023evolutionary,
  title={Evolutionary-scale prediction of atomic-level protein structure with a language model},
  author={Lin, Zeming and Akin, Halil and Rao, Roshan and Hie, Brian and Zhu, Zhongkai and Lu, Wenting and Smetanin, Nikita and Verkuil, Robert and Kabeli, Ori and Shmueli, Yaniv and others},
  journal={Science},
  volume={379},
  number={6637},
  pages={1123--1130},
  year={2023},
  publisher={American Association for the Advancement of Science}
}

@article{harding2026iuphar,
  title={The IUPHAR/BPS guide to PHARMACOLOGY in 2026},
  author={Harding, Simon D and Armstrong, Jane F and Faccenda, Elena and Southan, Christopher and Alexander, Stephen PH and Davenport, Anthony P and Spedding, Michael and Davies, Jamie A},
  journal={Nucleic Acids Research},
  volume={54},
  number={D1},
  pages={D1446--D1456},
  year={2026},
  publisher={Oxford University Press}
}

@article{sedan2016peptiderive,
  title={Peptiderive server: derive peptide inhibitors from protein--protein interactions},
  author={Sedan, Yuval and Marcu, Orly and Lyskov, Sergey and Schueler-Furman, Ora},
  journal={Nucleic acids research},
  volume={44},
  number={W1},
  pages={W536--W541},
  year={2016},
  publisher={Oxford University Press}
}

@book{brunton2018goodman,
editors= {A. G. Gilman, K. S. Goodman, A. Gilmaneditors},
title = {The pharmacological basis of therapeutics.},
year = {1980},
pages = {xvi + 1843 pp.},
publisher = {Macmillan Publishing Co.},
address = {New York},
language ={English},
item-type = {Book},
ISBN = {9780023447204},
issue = {6th Ed.}
}

@article{motlagh2014ensemble,
  title={The ensemble nature of allostery},
  author={Motlagh, Hesam N and Wrabl, James O and Li, Jing and Hilser, Vincent J},
  journal={Nature},
  volume={508},
  number={7496},
  pages={331--339},
  year={2014},
  publisher={Nature Publishing Group UK London}
}

@article{henzler2007dynamic,
  title={Dynamic personalities of proteins},
  author={Henzler-Wildman, Katherine and Kern, Dorothee},
  journal={Nature},
  volume={450},
  number={7172},
  pages={964--972},
  year={2007},
  publisher={Nature Publishing Group UK London}
}

@article{weikl2014conformational,
  title={Conformational selection in protein binding and function},
  author={Weikl, Thomas R and Paul, Fabian},
  journal={Protein Science},
  volume={23},
  number={11},
  pages={1508--1518},
  year={2014},
  publisher={Wiley Online Library}
}

@article{dedic2021therapeutic,
  title={Therapeutic potential of TAAR1 agonists in schizophrenia: evidence from preclinical models and clinical studies},
  author={Dedic, Nina and Dworak, Heather and Zeni, Courtney and Rutigliano, Grazia and Howes, Oliver D},
  journal={International Journal of Molecular Sciences},
  volume={22},
  number={24},
  pages={13185},
  year={2021},
  publisher={MDPI}
}

@article{huang2025structure,
  title={Structure based discovery of antipsychotic-like TAAR1 agonists},
  author={Huang, Sijie and Liu, Heng and Wu, Yujin and Braz, Joao M and Kranthi, Divya and Hall, Brendan W and Zhang, Xinyue and Radchenko, Dmytro S and Moroz, Yurii S and Irwin, John J and others},
  journal={bioRxiv},
  pages={2025--10},
  year={2025},
  publisher={Cold Spring Harbor Laboratory}
}

@article{xu2023ligand,
  title={Ligand recognition and G-protein coupling of trace amine receptor TAAR1},
  author={Xu, Zheng and Guo, Lulu and Yu, Jingjing and Shen, Siyuan and Wu, Chao and Zhang, Weifeng and Zhao, Chang and Deng, Yue and Tian, Xiaowen and Feng, Yuying and others},
  journal={Nature},
  volume={624},
  number={7992},
  pages={672--681},
  year={2023},
  publisher={Nature Publishing Group UK London}
}

@article{cao2021allosteric,
  title={Allosteric modulators enhance agonist efficacy by increasing the residence time of a GPCR in the active state},
  author={Cao, Anne-Marinette and Quast, Robert B and Fatemi, Fataneh and Rondard, Philippe and Pin, Jean-Philippe and Margeat, Emmanuel},
  journal={Nature Communications},
  volume={12},
  number={1},
  pages={5426},
  year={2021},
  publisher={Nature Publishing Group UK London}
}

@article{li2025rarefoldgpcr,
  title={RareFoldGPCR: Agonist Design Beyond Natural Amino Acids},
  author={Li, Qiuzhen and Helleday, Thomas and Bryant, Patrick},
  journal={bioRxiv},
  pages={2025--10},
  year={2025},
  publisher={Cold Spring Harbor Laboratory}
}

@article{ballarotto2023novo,
  title={De novo design of Nurr1 agonists via fragment-augmented generative deep learning in low-data regime},
  author={Ballarotto, Marco and Willems, Sabine and Stiller, Tanja and Nawa, Felix and Marschner, Julian A and Grisoni, Francesca and Merk, Daniel},
  journal={Journal of Medicinal Chemistry},
  volume={66},
  number={12},
  pages={8170--8177},
  year={2023},
  publisher={ACS Publications}
}

@article{Liao2022invest,
  title={Investigating Potential GLP-1 Receptor Agonists in Cyclopeptides from Pseudostellaria heterophylla, Linum usitatissimum, and Drymaria diandra, and Peptides Derived from Heterophyllin B for the Treatment of Type 2 Diabetes: An In Silico Study},
  author={Liao, Hui-Jun and Tzen, Jason T. C.},
  journal={Metabolites},
  year={2022},
  publisher={MDPI}
}

@article{jones2020structural,
  title={Structural and functional characterization of G protein--coupled receptors with deep mutational scanning},
  author={Jones, Eric M and Lubock, Nathan B and Venkatakrishnan, AJ and Wang, Jeffrey and Tseng, Alex M and Paggi, Joseph M and Latorraca, Naomi R and Cancilla, Daniel and Satyadi, Megan and Davis, Jessica E and others},
  journal={Elife},
  volume={9},
  pages={e54895},
  year={2020},
  publisher={eLife Sciences Publications, Ltd}
}

@article{scammell2011orexin,
  title={Orexin receptors: pharmacology and therapeutic opportunities},
  author={Scammell, Thomas E and Winrow, Christopher J},
  journal={Annual Review of Pharmacology and Toxicology},
  year={2011},
  volume={51},
  pages={243--266},
  publisher={Annual Reviews}
}

@article{nishino2000hypocretin,
 title={Hypocretin (orexin) deficiency in human narcolepsy},
 author={Nishino, Seiji and Ripley, Bonnie and Overeem, Sebastiaan and Lammers, Gert Jan and Mignot, Emmanuel},
  journal={The Lancet},
  year={2000},
  volume={355},
  publisher={Elsevier}
}

@article{karhu2019determinants,
  title={Determinants of orexin receptor binding and activation—a molecular dynamics study},
  author={Karhu, Lasse and Magarkar, Aniket and Bunker, Alex and Xhaard, Henri},
  journal={The Journal of Physical Chemistry B},
  year={2019},
  volume={123},
  publisher={American Chemical Society}
}

@article{Yin2016Structure,
  title={Structure and ligand-binding mechanism of the human OX1 and OX2 orexin receptors},
  author={Yin, Jie and Babaoglu, Kerim and Brautigam, Chad A and Clark, Lindsay and Shao, Zhenhua and Scheuermann, Thomas H and Harrell, Charles M and Gotter, Anthony L and Roecker, Anthony J and Winrow, Christopher J and Renger, John J and Coleman, Paul J and Rosenbaum, Daniel M},
  journal={Nature Structural \& Molecular Biology},
  year={2016},
  volume={23},
  number={4},
  pages={293--299},
  publisher={Nature Publishing Group UK London}
}

@article{jankauskaite2019skempi,
  title={SKEMPI 2.0: an updated benchmark of changes in protein--protein binding energy, kinetics and thermodynamics upon mutation},
  author={Jankauskait{\.e}, Justina and Jim{\'e}nez-Garc{\'\i}a, Brian and Dapk{\=u}nas, Justas and Fern{\'a}ndez-Recio, Juan and Moal, Iain H},
  journal={Bioinformatics},
  volume={35},
  number={3},
  pages={462--469},
  year={2019},
  publisher={Oxford University Press}
}

@article{noe2009constructing,
  title={Constructing the equilibrium ensemble of folding pathways from short off-equilibrium simulations},
  author={No{\'e}, Frank and Sch{\"u}tte, Christof and Vanden-Eijnden, Eric and Reich, Lothar and Weikl, Thomas R},
  journal={Proceedings of the National Academy of Sciences},
  volume={106},
  number={45},
  pages={19011--19016},
  year={2009},
  publisher={National Academy of Sciences}
}
\bibliographystyle{icml2026}

\newpage
\appendix
\onecolumn


\counterwithin{figure}{section}
\renewcommand{\thefigure}{\thesection\arabic{figure}}
\setcounter{figure}{0}

\section*{Appendix}
\section{Theoretical Proofs}
\label{sec:proofs}
This appendix records basic guarantees for TD3B. We (i) justify exponential tilting as the unique solution of a KL-regularized improvement objective, (ii) characterize the population-optimal Direction Oracle under weighted logistic risk, (iii) relate weighted denoising cross-entropy to fitting a target (tilted) distribution, (iv) bound the effect of oracle approximation error on the induced tilted distribution, and (v) state a simple separability consequence of zero contrastive loss.

\subsection{Notation and Setup}

We adopt notation from the main text.
\begin{itemize}
    \item $\mathcal{Y} := \mathcal{A}^L$ denotes the finite sequence space.
    \item $p_0(y)$ denotes a fixed base distribution on $\mathcal{Y}$ (e.g., the pre-trained MDLM distribution $p_{\theta_0}$).
    \item $S:\mathcal{Y}\to\mathbb{R}$ denotes a score (objective) and $\alpha>0$ a temperature.
    \item The (reward-)tilted distribution is
    \begin{equation}
        p^\star(y) := \frac{p_0(y)\exp\!\left(S(y)/\alpha\right)}{Z},
        \qquad
        Z := \sum_{y'\in\mathcal{Y}} p_0(y')\exp\!\left(S(y')/\alpha\right).
        \label{eq:tilt_def}
    \end{equation}
    \item For the direction task, we use labels $d(y)\in\{+1,-1\}$, confidence weights $\kappa(y)\in[0,1]$, and an oracle $f_\phi:\mathcal{Y}\to\mathbb{R}$. The direction-only score used for design is $S(y;d^\star)=d^\star f_\phi(y)$ for $d^\star\in\{+1,-1\}$.
    \item For distributions $P,Q$ on $\mathcal{Y}$ we write $\|P-Q\|_{\mathrm{TV}}$ for total variation and $\mathrm{KL}(P\|Q)$ for Kullback--Leibler divergence.
\end{itemize}

\subsection{Exponential Tilting as KL-Regularized Improvement}

\begin{theorem}[Exponential tilting solves KL-regularized improvement]
\label{thm:tilt_kl}
Fix a base distribution $p_0$ on $\mathcal{Y}$ and a score $S:\mathcal{Y}\to\mathbb{R}$. Consider the optimization problem over distributions $q$ on $\mathcal{Y}$:
\begin{equation}
    \max_{q\in\Delta(\mathcal{Y})}
    \left\{
        \mathbb{E}_{Y\sim q}[S(Y)]
        -
        \alpha\,\mathrm{KL}(q\|p_0)
    \right\},
    \qquad \alpha>0.
    \label{eq:kl_improve}
\end{equation}
Then the unique maximizer is the tilted distribution $p^\star$ in \eqref{eq:tilt_def}.
\end{theorem}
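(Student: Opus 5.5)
The plan is to rewrite the objective in \eqref{eq:kl_improve} as a single KL divergence to $p^\star$ plus a constant. Then existence and uniqueness of the maximizer follow from Gibbs' inequality.

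First I will confirm that $p^\star$ is well defined. Since $\mathcal{Y}$ is finite and $S$ is real-valued, $Z$ is a finite sum of nonnegative terms. It is strictly positive because $p_0$ has at least one atom with $p_0(y)>0$. Moreover $p^\star$ has exactly the same support as $p_0$, so $\log(p^\star/p_0)=S/\alpha-\log Z$ on $\mathrm{supp}(p_0)$.

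Next I fix any $q\in\Delta(\mathcal{Y})$. If $q$ puts mass outside $\mathrm{supp}(p_0)$, then $\mathrm{KL}(q\|p_0)=+\infty$ and the objective equals $-\infty$, so such $q$ cannot be maximizers. This step uses that $\mathbb{E}_q[S]$ is finite, which holds because $\mathcal{Y}$ is finite. Otherwise $q\ll p_0$, and all the sums below are over finitely many terms.

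For $q\ll p_0$ I substitute $S(y)=\alpha\log\frac{p^\star(y)}{p_0(y)}+\alpha\log Z$ into the expectation. This gives
\begin{equation*}
\mathbb{E}_q[S]-\alpha\,\mathrm{KL}(q\|p_0)
=\alpha\sum_y q(y)\Bigl[\log\tfrac{p^\star(y)}{p_0(y)}-\log\tfrac{q(y)}{p_0(y)}\Bigr]+\alpha\log Z
=\alpha\log Z-\alpha\,\mathrm{KL}(q\|p^\star),
\end{equation*}
using the usual convention $0\log 0=0$. Gibbs' inequality (nonnegativity of KL, with equality iff the two distributions coincide) then gives value at most $\alpha\log Z$, with equality iff $q=p^\star$. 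Since $\alpha>0$, $p^\star$ is the unique maximizer, and the optimal value is $\alpha\log Z$.

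The only delicate point is the support bookkeeping. I need to verify that $\mathrm{supp}(p^\star)=\mathrm{supp}(p_0)$, so that $q\ll p_0$ is equivalent to $q\ll p^\star$ and the rewriting is legitimate term by term. I also need to treat points where $q(y)=0$ through the $0\log 0$ convention. As a cross-check, one could instead use Lagrange multipliers together with strict concavity of the objective, which holds because $-\mathrm{KL}(\cdot\|p_0)$ is strictly concave on the simplex. I will rely on the KL rewriting above, since it handles boundary points of the simplex without extra work.
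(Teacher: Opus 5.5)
Your proof is correct, and it takes a different route from the paper's. The paper forms a Lagrangian with a single multiplier for $\sum_y q(y)=1$. It sets the derivative in $q(y)$ to zero to get $\log q(y)=\log p_0(y)+S(y)/\alpha+c$, and then cites strict concavity for uniqueness. You instead verify the known answer: you use the identity $\mathbb{E}_q[S]-\alpha\,\mathrm{KL}(q\|p_0)=\alpha\log Z-\alpha\,\mathrm{KL}(q\|p^\star)$ and then Gibbs' inequality.

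The Lagrangian route has the advantage of \emph{deriving} the exponential form rather than presupposing it. As written, however, it is only a stationarity computation, and it leaves several things unaddressed:
\begin{itemize}
\item it ignores the constraints $q(y)\ge 0$;
\item it is not meaningful where $p_0(y)=0$, where $\log p_0(y)$ is undefined;
\item it is not meaningful where $q(y)=0$, where the derivative of $q\log q$ diverges;
\item it does not explain why the stationary point is a global maximizer lying in the relative interior of the face $\mathrm{supp}(p_0)$.
\end{itemize}

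Your rewriting handles all of these boundary cases automatically, through the $q\not\ll p_0$ case split and the $0\log 0$ convention. It also gives more than the statement asks: the optimal value $\alpha\log Z$, and the exact suboptimality gap $\alpha\,\mathrm{KL}(q\|p^\star)$ for every $q$. That gap is a quantitative form of uniqueness: any $q$ within $\epsilon$ of the optimum satisfies $\mathrm{KL}(q\|p^\star)\le\epsilon/\alpha$. The same identity underlies the stability reasoning in the paper's later tilt-robustness result.
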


\begin{proof}
Since $\mathcal{Y}$ is finite, \eqref{eq:kl_improve} is a strictly concave optimization problem in $q$.
Introduce a Lagrange multiplier $\lambda$ for the constraint $\sum_y q(y)=1$. The Lagrangian is
\begin{equation}
    \mathcal{J}(q,\lambda)
    =
    \sum_{y} q(y)S(y)
    -
    \alpha\sum_{y} q(y)\log\frac{q(y)}{p_0(y)}
    +
    \lambda\left(\sum_y q(y)-1\right).
\end{equation}
Taking derivatives with respect to $q(y)$ and setting to zero gives
\begin{equation}
    S(y)
    -
    \alpha\left(\log q(y) - \log p_0(y) + 1\right)
    + \lambda
    = 0,
\end{equation}
so $\log q(y)=\log p_0(y)+S(y)/\alpha + c$ for a constant $c$. Hence
\begin{equation}
    q(y) \propto p_0(y)\exp(S(y)/\alpha),
\end{equation}
and normalization yields \eqref{eq:tilt_def}. Strict concavity implies uniqueness.
\end{proof}

\begin{corollary}[Limiting cases]
\label{cor:tilt_limits}
Let $p^\star$ be defined by \eqref{eq:tilt_def}. Then:
\begin{enumerate}
    \item As $\alpha\to\infty$, $p^\star \to p_0$ in total variation.
    \item As $\alpha\to 0^+$, $p^\star$ concentrates on $\arg\max_{y\in\mathcal{Y}} S(y)$ (within the support of $p_0$).
\end{enumerate}
\end{corollary}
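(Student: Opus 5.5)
The plan is to work directly with the explicit formula \eqref{eq:tilt_def}. Since $\mathcal{Y}$ is finite, it suffices to show pointwise convergence of $p^\star(y)$ for each $y$. Pointwise convergence on a finite set implies convergence in total variation, because $\|P-Q\|_{\mathrm{TV}}=\tfrac12\sum_y|P(y)-Q(y)|$ is a finite sum. Throughout, I write $\mathcal{Y}_0:=\{y:p_0(y)>0\}$ for the support of $p_0$ and $S^\star:=\max_{y\in\mathcal{Y}_0}S(y)$; both are well defined by finiteness.

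\textbf{Part 1 ($\alpha\to\infty$).} For each $y$, $\exp(S(y)/\alpha)\to 1$, since $S(y)$ is a fixed real number. The normalizer $Z=\sum_{y'}p_0(y')\exp(S(y')/\alpha)$ is a finite sum of convergent terms, so $Z\to\sum_{y'}p_0(y')=1$. Hence $p^\star(y)\to p_0(y)$ for every $y$, and TV convergence follows by the remark above. Alternatively, one can bound everything uniformly: $e^{-M/\alpha}\le \exp(S/\alpha)\le e^{M/\alpha}$ with $M=\max_y|S(y)|$. This gives $|p^\star(y)/p_0(y)-1|\le e^{2M/\alpha}-1$ on $\mathcal{Y}_0$, hence the explicit rate $\|p^\star-p_0\|_{\mathrm{TV}}\le \tfrac12(e^{2M/\alpha}-1)=O(1/\alpha)$.

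\textbf{Part 2 ($\alpha\to0^+$).} First note that $p^\star(y)=0$ for $y\notin\mathcal{Y}_0$ for every $\alpha$, so only $\mathcal{Y}_0$ matters. Multiply numerator and denominator by $e^{-S^\star/\alpha}$ to get
\[
p^\star(y)=\frac{p_0(y)\,e^{(S(y)-S^\star)/\alpha}}{\sum_{y'\in\mathcal{Y}_0}p_0(y')\,e^{(S(y')-S^\star)/\alpha}}.
\]
Let $A:=\arg\max_{y\in\mathcal{Y}_0}S(y)$, which is nonempty. For $y'\in A$ the exponent is $0$. For $y'\in\mathcal{Y}_0\setminus A$ the exponent is at most $-\gamma/\alpha$, where $\gamma>0$ is the gap between $S^\star$ and the second-largest value of $S$ on $\mathcal{Y}_0$; this gap is positive by finiteness. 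Therefore the denominator converges to $p_0(A)>0$, and each non-maximizer's numerator is at most $p_0(y)e^{-\gamma/\alpha}\to0$. Consequently
\[
p^\star\to p_0(\cdot\mid A)=\frac{p_0(\cdot)\mathbf 1_A}{p_0(A)}
\]
pointwise, hence in TV, and $p^\star(\mathcal{Y}\setminus A)\le e^{-\gamma/\alpha}/p_0(A)\to0$. This is the concentration claim, and it also identifies the limit as $p_0$ restricted to and renormalized on the maximizers.

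\textbf{Main obstacle.} The only delicate point is interpretive rather than technical. If $p_0$ gives zero mass to some global maximizer of $S$, then $p^\star$ can never put mass there. This is why the statement's qualifier ``within the support of $p_0$'' is essential, and why I define $A$ relative to $\mathcal{Y}_0$ and not $\mathcal{Y}$. Positivity of the gap $\gamma$ depends on finiteness of $\mathcal{Y}$. If $S$ is constant on $\mathcal{Y}_0$, then $A=\mathcal{Y}_0$ and the statement holds trivially with $p^\star\equiv p_0$. Theorem~\ref{thm:tilt_kl} is not needed beyond identifying $p^\star$ with \eqref{eq:tilt_def}.
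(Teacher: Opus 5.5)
Your proposal is correct and follows the same route as the paper: you work directly with the explicit formula \eqref{eq:tilt_def}, using $\exp(S/\alpha)\to 1$ and $Z\to 1$ as $\alpha\to\infty$, and, as $\alpha\to 0^+$, the fact that the softmax normalizer is dominated by the maximizers of $S$. Your version makes the paper's brief sketch rigorous: you normalize by $e^{-S^\star/\alpha}$, use the positive gap $\gamma$ that finiteness guarantees, identify the limit explicitly as $p_0(\cdot\mid A)$, and give explicit rates in both regimes.
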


\begin{proof}
Write $p^\star(y)=p_0(y)\exp(S(y)/\alpha)/Z$. As $\alpha\to\infty$, $\exp(S(y)/\alpha)\to 1$ uniformly, so $Z\to 1$ and $p^\star\to p_0$. As $\alpha\to 0^+$, the normalization is dominated by maximizers of $S$ because $\exp(S(y)/\alpha)$ is a softmax with temperature $\alpha$.
\end{proof}

\begin{proposition}[Relative odds under tilting]
\label{prop:relative_odds}
For any $y_1,y_2\in\mathcal{Y}$ with $p_0(y_1),p_0(y_2)>0$,
\begin{equation}
    \frac{p^\star(y_1)}{p^\star(y_2)}
    =
    \frac{p_0(y_1)}{p_0(y_2)}\,
    \exp\!\left(\frac{S(y_1)-S(y_2)}{\alpha}\right).
\end{equation}
In particular, if $p_0(y_1)=p_0(y_2)$ then the odds ratio depends only on $S(y_1)-S(y_2)$.
\end{proposition}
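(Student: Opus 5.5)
The plan is a direct computation from the definition \eqref{eq:tilt_def}; no optimization argument is needed, and Theorem~\ref{thm:tilt_kl} serves only as motivation for why $p^\star$ is the relevant object. First note that the normalizer $Z$ is strictly positive and finite. It is a finite sum over $\mathcal{Y}$ of nonnegative terms $p_0(y')\exp(S(y')/\alpha)$, and at least one term is positive because $p_0$ is a probability distribution and $\exp(\cdot)>0$. Hence $p^\star$ is well defined. Moreover, $p^\star(y)>0$ exactly when $p_0(y)>0$, so the hypothesis $p_0(y_1),p_0(y_2)>0$ guarantees that the denominator $p^\star(y_2)$ is nonzero and the ratio makes sense.

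Next, I would write out both probabilities explicitly, $p^\star(y_i)=p_0(y_i)\exp(S(y_i)/\alpha)/Z$ for $i=1,2$, and divide. The common factor $1/Z$ cancels; this cancellation is the whole content of the statement, namely that relative odds under the tilt do not depend on the intractable partition function. Collecting the exponentials with $\exp(a)/\exp(b)=\exp(a-b)$ gives
\[
\frac{p^\star(y_1)}{p^\star(y_2)}=\frac{p_0(y_1)}{p_0(y_2)}\exp\!\left(\frac{S(y_1)-S(y_2)}{\alpha}\right).
\]
For the ``in particular'' clause, I would set $p_0(y_1)=p_0(y_2)$. The prefactor is then $1$, so the ratio equals $\exp((S(y_1)-S(y_2))/\alpha)$, which depends only on the score difference (for the fixed temperature $\alpha$).

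There is no genuine obstacle here. The only point requiring care is the well-definedness step: checking $Z>0$ and that the positivity of $p_0$ at both points transfers to $p^\star$, so that no division by zero occurs. I would state this explicitly before performing the cancellation.
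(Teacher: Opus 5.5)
Your proposal is correct and takes the same approach as the paper, which proves the statement in one line by noting that the normalizer $Z$ cancels in the ratio of the two tilted probabilities. Your explicit check that $0<Z<\infty$ and that $p^\star(y_2)>0$ is a small addition the paper leaves implicit.
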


\begin{proof}
Immediate from the definition \eqref{eq:tilt_def} since the normalizer $Z$ cancels.
\end{proof}

\paragraph{Binary direction specialization.}
For $S(y;d^\star)=d^\star f(y)$ and $d^\star\in\{+1,-1\}$,
\begin{equation}
\frac{p^\star(y\mid +1)}{p^\star(y\mid -1)}
=
\frac{Z_{-}}{Z_{+}}\,
\exp\!\left(\frac{2 f(y)}{\alpha}\right),
\qquad
Z_{\pm}:=\sum_{y}p_0(y)\exp(\pm f(y)/\alpha).
\label{eq:dir_odds}
\end{equation}
Thus larger oracle score $f(y)$ implies larger posterior odds of the $+1$-tilt relative to the $-1$-tilt.

\subsection{Population Optimality of the Direction Oracle}

We formalize the direction oracle as a weighted logistic risk minimizer.

\begin{theorem}[Bayes-optimal oracle under weighted logistic loss]
\label{thm:oracle_bayes}
Let $(Y,D)$ be a random pair where $Y\in\mathcal{Y}$ and $D\in\{+1,-1\}$. Let $\kappa:\mathcal{Y}\times\{+1,-1\}\to[0,\infty)$ be a measurable weight (in TD3B, $\kappa$ encodes down-weighting of partial agonists and exclusion of non-binders). Consider minimizing
\begin{equation}
    \mathcal{R}(f)
    :=
    \mathbb{E}\!\left[
        \kappa(Y,D)\,\log\!\left(1+\exp(-D f(Y))\right)
    \right]
\end{equation}
over all functions $f:\mathcal{Y}\to\mathbb{R}$. Define
\begin{equation}
    \eta_{+}(y) := \mathbb{E}[\kappa(Y,D)\mathbf{1}\{D=+1\}\mid Y=y],
    \qquad
    \eta_{-}(y) := \mathbb{E}[\kappa(Y,D)\mathbf{1}\{D=-1\}\mid Y=y].
\end{equation}
If $\eta_{+}(y)+\eta_{-}(y)>0$, the pointwise minimizer satisfies
\begin{equation}
    f^\star(y) = \log\frac{\eta_{+}(y)}{\eta_{-}(y)}.
    \label{eq:oracle_logodds}
\end{equation}
\end{theorem}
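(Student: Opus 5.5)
The plan is to reduce the risk minimization to a family of independent scalar problems, one for each $y\in\mathcal{Y}$. Since $f$ ranges over all functions on the finite set $\mathcal{Y}$, no coupling constraint links $f(y)$ for different $y$. Conditioning on $Y$ gives
\begin{equation}
    \mathcal{R}(f)=\sum_{y\in\mathcal{Y}}\Pr(Y=y)\,\Big[\eta_{+}(y)\,\ell(f(y))+\eta_{-}(y)\,\ell(-f(y))\Big],
    \qquad \ell(u):=\log(1+e^{-u}).
\end{equation}
This uses $\kappa(Y,D)\log(1+e^{-Df(Y)})=\kappa(Y,D)\mathbf{1}\{D=+1\}\ell(f(Y))+\kappa(Y,D)\mathbf{1}\{D=-1\}\ell(-f(Y))$, together with the definitions of $\eta_\pm$. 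Hence $f$ minimizes $\mathcal{R}$ if and only if, for every $y$ with $\Pr(Y=y)>0$, the value $f(y)$ minimizes
\begin{equation}
    \phi_y(u):=\eta_{+}(y)\,\ell(u)+\eta_{-}(y)\,\ell(-u),\qquad u\in\mathbb{R}.
\end{equation}
On $y$ with $\Pr(Y=y)=0$ the value of $f$ is irrelevant. The statement is therefore understood pointwise, on the support of $Y$.

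Next I solve the scalar problem. Using $\ell'(u)=-\sigma(-u)$, the first-order condition is
\begin{equation}
    \phi_y'(u)=-\eta_{+}(y)\,\sigma(-u)+\eta_{-}(y)\,\sigma(u)=0 .
\end{equation}
Since $\sigma(u)/\sigma(-u)=e^{u}$, this becomes $e^{u}=\eta_{+}(y)/\eta_{-}(y)$, so $u=\log(\eta_+(y)/\eta_-(y))$. For uniqueness, I note $\ell''(u)=\sigma(u)\sigma(-u)>0$. Thus $\phi_y''(u)=(\eta_+(y)+\eta_-(y))\,\sigma(u)\sigma(-u)>0$ whenever $\eta_++\eta_->0$, which makes $\phi_y$ strictly convex and the critical point the unique global minimizer.

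The main obstacle is the boundary case where exactly one of $\eta_\pm(y)$ vanishes. The hypothesis $\eta_++\eta_->0$ permits this, but then the formula gives $\pm\infty$ and the scalar problem has no finite minimizer. For example, if $\eta_-(y)=0$, then $\phi_y(u)=\eta_+(y)\ell(u)$ decreases strictly to its infimum $0$ as $u\to\infty$. I would handle this by working on the extended real line, taking $f^\star(y)=\log(\eta_+/\eta_-)\in[-\infty,+\infty]$ with the conventions $\ell(+\infty)=0$ and $\ell(-\infty)=+\infty$. I would then state that the infimum of $\phi_y$ is attained only in that limit, so $f^\star$ is the pointwise minimizer in the extended sense and is finite exactly when both $\eta_\pm(y)>0$.

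Finally, I would remark that when $\kappa$ depends only on $Y$, the result reduces to the weighted log-odds $\log\frac{\Pr(D=+1\mid y)}{\Pr(D=-1\mid y)}$. This is consistent with using $\operatorname{sign}(f_\phi)$ as the predicted direction.
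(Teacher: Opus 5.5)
Your proposal is correct and follows essentially the same route as the paper: reduce to the pointwise conditional risk $\eta_+(y)\log(1+e^{-u})+\eta_-(y)\log(1+e^{u})$, invoke strict convexity, and solve the first-order condition to get $e^{u}=\eta_+(y)/\eta_-(y)$. Your treatment of the case where exactly one of $\eta_\pm(y)$ vanishes is a genuine refinement that the paper's proof silently skips. In that case there is no finite minimizer, so $f^\star(y)=\pm\infty$ only in the extended sense, and the paper's equation $-\eta_+(y)+\eta_-(y)e^{u}=0$ has no solution.
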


\begin{proof}
Fix $y\in\mathcal{Y}$ and write the conditional risk (up to an additive constant independent of $f(y)$) as
\begin{equation}
    r_y(u)
    =
    \eta_{+}(y)\log(1+\exp(-u))
    +
    \eta_{-}(y)\log(1+\exp(u)),
    \qquad u:=f(y).
\end{equation}
This is strictly convex in $u$. Differentiate and set to zero:
\begin{equation}
    r_y'(u)
    =
    -\eta_{+}(y)\frac{1}{1+\exp(u)}
    +
    \eta_{-}(y)\frac{\exp(u)}{1+\exp(u)}
    = 0.
\end{equation}
Multiplying by $1+\exp(u)$ gives $-\eta_{+}(y)+\eta_{-}(y)\exp(u)=0$, hence $\exp(u)=\eta_{+}(y)/\eta_{-}(y)$ and \eqref{eq:oracle_logodds} follows.
\end{proof}

\paragraph{Remark.}
Theorem~\ref{thm:oracle_bayes} shows that, in the population limit, a weighted logistic oracle estimates a (weighted) log-odds function. This makes exponential tilting with $S(y;d^\star)=d^\star f(y)$ a principled way to bias generation toward one directional class.

\subsection{Weighted Denoising Cross-Entropy Fits a Target Distribution}

We formalize the effect of WDCE as fitting an MDLM to a reweighted data distribution.

\begin{lemma}[Weighted risk equals unweighted risk under a reweighted distribution]
\label{lem:reweight}
Let $r$ be any distribution on $\mathcal{Y}$ and let $w:\mathcal{Y}\to[0,\infty)$ be a weight function with $0<\mathbb{E}_{Y\sim r}[w(Y)]<\infty$. Define the normalized reweighted distribution
\begin{equation}
    \pi(y) := \frac{r(y)w(y)}{\mathbb{E}_{Y\sim r}[w(Y)]}.
\end{equation}
Then for any nonnegative loss $\ell(y)$,
\begin{equation}
    \mathbb{E}_{Y\sim r}[w(Y)\ell(Y)]
    =
    \mathbb{E}_{Y\sim r}[w(Y)]\cdot \mathbb{E}_{Y\sim \pi}[\ell(Y)].
\end{equation}
\end{lemma}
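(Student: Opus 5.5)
The plan is to verify the identity directly from the definition of $\pi$. Write $W := \mathbb{E}_{Y\sim r}[w(Y)]$; by hypothesis $0 < W < \infty$, so $\pi$ is well defined. It is nonnegative because $r\ge 0$ and $w\ge 0$, and it sums to one since $\sum_y r(y)w(y)/W = W/W = 1$. Hence $\pi$ is a genuine probability distribution on $\mathcal{Y}$.

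Next, expand the right-hand side as a sum over the finite set $\mathcal{Y}$:
\begin{equation*}
W\cdot \mathbb{E}_{Y\sim\pi}[\ell(Y)] = W\sum_{y\in\mathcal{Y}} \frac{r(y)w(y)}{W}\,\ell(y) = \sum_{y\in\mathcal{Y}} r(y)w(y)\ell(y) = \mathbb{E}_{Y\sim r}[w(Y)\ell(Y)].
\end{equation*}
The factor $W$ cancels, and what remains is exactly the left-hand side.

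The only point needing care is whether the sums can take the value $+\infty$ or involve rearrangements. Because $\mathcal{Y}$ is finite and every term is nonnegative ($\ell\ge 0$, $w\ge0$), both sides are finite sums of nonnegative numbers and the equality is exact. If one allowed a countable $\mathcal{Y}$, Tonelli's theorem for nonnegative series would justify the same manipulation, with both sides possibly equal to $+\infty$ simultaneously. This is therefore a bookkeeping step rather than a real obstacle.

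Finally, I would note how the lemma applies to WDCE. Take $r$ to be the proposal (buffer) distribution, $w$ the importance weight from Eq.~\eqref{eq:importance-weight2}, and $\ell(y)=\mathcal{L}_{\mathrm{DCE}}(\theta;y)$. The lemma then shows that minimizing $\mathcal{L}_{\mathrm{WDCE}}$ in $\theta$ is the same as minimizing the ordinary DCE under $\pi$, since the positive constant $W$ does not depend on $\theta$. When $w\propto p^\star/r$, the reweighted distribution satisfies $\pi=p^\star$.
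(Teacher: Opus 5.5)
Your argument is correct and matches the paper's proof: expand $\mathbb{E}_{Y\sim\pi}[\ell(Y)]$ as a finite sum using the definition of $\pi$, then cancel the normalizer $\mathbb{E}_{Y\sim r}[w(Y)]$. Your extra checks (that $\pi$ is a probability distribution, and the Tonelli remark for countable $\mathcal{Y}$) are sound additions the paper leaves implicit; in your WDCE remark, though, the weight in Eq.~\eqref{eq:importance-weight2} depends on the whole trajectory $y_{0:1}$, so the lemma applies there with $\mathcal{Y}$ replaced by path space.
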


\begin{proof}
By definition,
\(
\mathbb{E}_{Y\sim \pi}[\ell(Y)]
=
\sum_y \pi(y)\ell(y)
=
\frac{1}{\mathbb{E}_{r}[w]}\sum_y r(y)w(y)\ell(y).
\)
Rearrange.
\end{proof}

\begin{theorem}[Population optimality of WDCE denoisers]
\label{thm:wdce_opt}
Fix a corruption kernel family $q_t(y_t\mid y)$ and a time sampling distribution over $t\in[0,1]$. Let $\pi$ be a target distribution on $\mathcal{Y}$ and define the joint $(y,y_t)$ by $y\sim \pi$ and $y_t\sim q_t(\cdot\mid y)$. Consider the MDLM denoising objective
\begin{equation}
\mathcal{L}_{\pi}(\theta)
=
\mathbb{E}_{t,y,y_t}
\bigg[
\sum_{\ell:y_t^\ell=\boldsymbol{M}}
-\log p_\theta\!\left(y^\ell \mid y_t^{\mathrm{UM}}, t\right)
\bigg].
\label{eq:pi_dce}
\end{equation}
Then for each time $t$ and each masked position $\ell$, the minimizer satisfies
\begin{equation}
p_{\theta^\star}\left(y^\ell = a \mid y_t^{\mathrm{UM}}, t\right)
=
\mathbb{P}_\pi\!\left(y^\ell = a \mid y_t^{\mathrm{UM}}, t\right)
\qquad \forall a\in\mathcal{A},
\end{equation}
that is, the optimal denoiser recovers the true conditional under $\pi$.
\end{theorem}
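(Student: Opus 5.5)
The plan is the standard argument that cross-entropy is minimized by the true conditional, applied separately to each conditioning context. We treat the denoiser as nonparametric: for each $(t,\ell)$ and each partially masked context $c=y_t^{\mathrm{UM}}$, the network output $p_\theta(\cdot\mid c,t)$ is an arbitrary distribution on $\mathcal{A}$, chosen independently across contexts. This is the same realizability assumption used for the Bayes-optimal oracle in Theorem~\ref{thm:oracle_bayes}.

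\textbf{Step 1: rewrite the objective as a sum of per-context cross-entropies.}
\begin{itemize}
\item Since $\mathcal{Y}$ is finite and $\mathcal{A}$ is finite, we can exchange the sum over masked positions with the expectation by introducing the indicator $\mathbf{1}\{y_t^\ell=\boldsymbol{M}\}$.
\item Conditioning on $t$, $\ell$ and the context $c$ with $y_t^\ell=\boldsymbol{M}$, the objective \eqref{eq:pi_dce} becomes
\begin{equation*}
\mathcal{L}_\pi(\theta)=\mathbb{E}_t\sum_\ell\sum_{c}\mathbb{P}_\pi\big(y_t^{\mathrm{UM}}=c,\,y_t^\ell=\boldsymbol{M}\mid t\big)\sum_{a\in\mathcal{A}}\mathbb{P}_\pi\big(y^\ell=a\mid c,t\big)\big(-\log p_\theta(a\mid c,t)\big).
\end{equation*}
\item Here the joint law of $(y,y_t)$ is induced by $y\sim\pi$ and $y_t\sim q_t(\cdot\mid y)$.
\item For masking kernels, the event that position $\ell$ is masked is part of the context $c$. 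So the conditional of $y^\ell$ given $(c,t)$ is well defined whenever the context has positive probability.
\end{itemize}

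\textbf{Step 2: minimize each inner term.}
\begin{itemize}
\item Each inner sum is a cross-entropy $H(P_c,Q_c)$, with $P_c=\mathbb{P}_\pi(y^\ell=\cdot\mid c,t)$ and $Q_c=p_\theta(\cdot\mid c,t)$.
\item It decomposes as $H(P_c,Q_c)=H(P_c)+\mathrm{KL}(P_c\|Q_c)$.
\item By Gibbs' inequality, this is minimized exactly at $Q_c=P_c$. Alternatively, a Lagrange-multiplier calculation like the one in Theorem~\ref{thm:tilt_kl} gives the same minimizer.
\item The minimizer is unique on contexts of positive probability, where the weight $\mathbb{P}_\pi(c,\ell\text{ masked}\mid t)$ is strictly positive.
\item The outer weights are nonnegative and do not depend on $\theta$, and the per-context choices are decoupled. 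Therefore minimizing each term separately minimizes the whole sum.
\end{itemize}

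\textbf{Step 3: reconcile the time weighting.} The $1/t$ factor in $\mathcal{L}_{\mathrm{DCE}}$, or any positive time-sampling density, only rescales the positive weights per $t$, so it leaves the minimizer unchanged.

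\textbf{Corollary via Lemma~\ref{lem:reweight}.} Applying the lemma with $\ell(y)$ equal to the inner denoising loss shows that the WDCE objective with buffer distribution $r$ and weights $w$ has the same minimizer as $\mathcal{L}_\pi$ with $\pi\propto rw$. So when $w$ targets $p^\star$, the optimal denoiser recovers the conditionals of $p^\star$.

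\textbf{Main obstacle.} The main difficulty is being precise about the statement rather than the calculation. The conclusion only holds on contexts reachable with positive probability under $\pi$ and $q_t$. It also requires that $p_\theta$ is expressive enough to set each conditional independently; a shared-parameter network may not achieve the pointwise optimum. We would state both qualifications explicitly, and note that the identity holds for $\pi$-almost every context.
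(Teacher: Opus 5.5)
Your proposal is correct and follows essentially the same route as the paper: condition on the context $(y_t^{\mathrm{UM}},t)$ together with the event that position $\ell$ is masked, recognize the inner term as the cross-entropy against the true conditional under $\pi$, and minimize it pointwise using the nonnegative, $\theta$-independent outer weights. The realizability assumption (a nonparametric denoiser) and the restriction to positive-probability contexts are left implicit in the paper's proof; stating them sharpens the theorem without changing the argument.
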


\begin{proof}
Fix $t$ and condition on the context $C:=(y_t^{\mathrm{UM}},t)$ and the event that position $\ell$ is masked. The inner term in \eqref{eq:pi_dce} is the cross-entropy between the true conditional distribution of $y^\ell$ given $C$ and the model distribution $p_\theta(\cdot\mid C)$. Cross-entropy is minimized uniquely by matching the true conditional. Taking expectation over contexts yields the result.
\end{proof}

\paragraph{Connection to reward tilting.}
If the proposal distribution $r$ is $p_0$ and weights are 
\begin{equation}
    w(y)=\exp\!\left(\frac{S(y_1)}{\alpha}\right) \cdot \prod_{n=1}^T \frac{p_{0}(y_{t_{n-1}} \mid y_{t_n})}{p_{\bar{\theta}}(y_{t_{n-1}} \mid y_{t_n})},
\end{equation}
then the reweighted distribution $\pi$ in Lemma~\ref{lem:reweight} equals the tilted distribution $p^\star$ in \eqref{eq:tilt_def}. Thus WDCE is (in the population limit) standard MDLM training under the tilted target distribution.

\subsection{Stability of Tilting Under Oracle Approximation Error}

\begin{theorem}[Tilt robustness under bounded score error]
\label{thm:tilt_robust}
Let $S^\star:\mathcal{Y}\to\mathbb{R}$ be an ideal score and let $S:\mathcal{Y}\to\mathbb{R}$ satisfy
\begin{equation}
    \sup_{y\in\mathcal{Y}} |S(y)-S^\star(y)| \le \varepsilon.
\end{equation}
Let $p^\star$ and $\widetilde{p}$ be the corresponding tilted distributions built from $(p_0,S^\star)$ and $(p_0,S)$ with the same temperature $\alpha>0$. Then
\begin{equation}
    \mathrm{KL}(\widetilde{p}\|p^\star) \le \frac{2\varepsilon}{\alpha},
    \qquad
    \mathrm{KL}(p^\star\|\widetilde{p}) \le \frac{2\varepsilon}{\alpha},
\end{equation}
and therefore, by Pinsker's inequality,
\begin{equation}
    \|\widetilde{p}-p^\star\|_{\mathrm{TV}} \le \sqrt{\frac{\varepsilon}{\alpha}}.
\end{equation}
\end{theorem}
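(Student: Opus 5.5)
The plan is to compute the log-density ratio of the two tilted distributions explicitly and show that it is uniformly bounded by $2\varepsilon/\alpha$. Every KL divergence is an average of a log-ratio, so this sup-norm bound will give both KL inequalities at once. Pinsker's inequality then gives the total-variation bound.

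Write $Z^\star=\sum_{y}p_0(y)e^{S^\star(y)/\alpha}$ and $\widetilde Z=\sum_y p_0(y)e^{S(y)/\alpha}$. Both $p^\star$ and $\widetilde p$ are supported on the support of $p_0$, so on that support
\[
\log\frac{\widetilde p(y)}{p^\star(y)}=\frac{S(y)-S^\star(y)}{\alpha}-\log\frac{\widetilde Z}{Z^\star},
\]
because the $p_0(y)$ factors cancel, just as in Proposition~\ref{prop:relative_odds}.

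The first term lies in $[-\varepsilon/\alpha,\varepsilon/\alpha]$ by hypothesis. For the second term, note that $e^{-\varepsilon/\alpha}e^{S^\star(y)/\alpha}\le e^{S(y)/\alpha}\le e^{\varepsilon/\alpha}e^{S^\star(y)/\alpha}$ holds pointwise. Summing against $p_0$ gives $e^{-\varepsilon/\alpha}\le \widetilde Z/Z^\star\le e^{\varepsilon/\alpha}$, so $|\log(\widetilde Z/Z^\star)|\le\varepsilon/\alpha$. Combining the two terms yields $\sup_y|\log(\widetilde p(y)/p^\star(y))|\le 2\varepsilon/\alpha$.

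Now take expectations. $\mathrm{KL}(\widetilde p\|p^\star)=\mathbb{E}_{\widetilde p}[\log(\widetilde p/p^\star)]$ is at most the sup bound $2\varepsilon/\alpha$. Symmetrically, $\mathrm{KL}(p^\star\|\widetilde p)=\mathbb{E}_{p^\star}[\log(p^\star/\widetilde p)]\le 2\varepsilon/\alpha$. Pinsker's inequality, $\|P-Q\|_{\mathrm{TV}}\le\sqrt{\mathrm{KL}(P\|Q)/2}$, then gives $\|\widetilde p-p^\star\|_{\mathrm{TV}}\le\sqrt{\varepsilon/\alpha}$. This is where the factor $2$ in the KL bound cancels the $1/2$ in Pinsker.

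There is no real obstacle. The only points needing care are bounding the normalizer ratio, which uses the monotonicity of the sum, and checking that the two supports coincide so the log-ratio is finite everywhere it is integrated. If desired, the KL bound could be sharpened, for instance to order $\varepsilon^2/\alpha^2$ via a second-order expansion, but the crude sup bound is all the statement requires.
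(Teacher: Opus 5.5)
Your proof is correct and matches the paper's argument essentially step for step. The paper writes $\widetilde p = p^\star e^{\delta/\alpha}/\mathbb{E}_{p^\star}[e^{\delta/\alpha}]$, and your ratio $\widetilde Z/Z^\star$ is exactly that expectation; it then bounds the normalizer in $[e^{-\varepsilon/\alpha},e^{\varepsilon/\alpha}]$, obtains a uniform $2\varepsilon/\alpha$ bound on the log-ratio, averages under each distribution, and applies Pinsker. Your explicit check that both tilts share the support of $p_0$ is a small point the paper leaves implicit.
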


\begin{proof}
Write $S=S^\star+\delta$ with $|\delta(y)|\le\varepsilon$. Then
\begin{equation}
    \widetilde{p}(y)
    =
    \frac{p_0(y)\exp((S^\star(y)+\delta(y))/\alpha)}{\widetilde{Z}}
    =
    p^\star(y)\,\frac{\exp(\delta(y)/\alpha)}{\mathbb{E}_{Y\sim p^\star}[\exp(\delta(Y)/\alpha)]}.
\end{equation}
Since $\exp(\delta/\alpha)\in[\exp(-\varepsilon/\alpha),\exp(\varepsilon/\alpha)]$, the normalizer ratio satisfies
\begin{equation}
    \mathbb{E}_{p^\star}[\exp(\delta/\alpha)] \in [\exp(-\varepsilon/\alpha),\exp(\varepsilon/\alpha)].
\end{equation}
Hence for all $y$,
\begin{equation}
    \log\frac{\widetilde{p}(y)}{p^\star(y)}
    =
    \frac{\delta(y)}{\alpha}
    -
    \log \mathbb{E}_{p^\star}[\exp(\delta/\alpha)]
    \in \left[-\frac{2\varepsilon}{\alpha}, \frac{2\varepsilon}{\alpha}\right].
\end{equation}
Taking expectation under $\widetilde{p}$ yields $\mathrm{KL}(\widetilde{p}\|p^\star)\le 2\varepsilon/\alpha$. The reverse KL bound follows symmetrically by swapping roles of $(S,S^\star)$. Pinsker's inequality gives the TV bound.
\end{proof}

\subsection{A Separability Consequence of Zero Contrastive Loss}

\begin{proposition}[Zero margin-contrastive loss implies linear separability]
\label{prop:contrastive_sep}
Let $\{(y_i,d_i)\}_{i=1}^N$ be labeled samples with $d_i\in\{+1,-1\}$ and embeddings $h(y_i)\in\mathbb{R}^m$. Consider the margin-contrastive loss
\begin{equation}
\mathcal{L}_{\mathrm{ctr}}
=
\sum_{(i,j):d_i=d_j}\|h(y_i)-h(y_j)\|_2^2
+
\sum_{(i,j):d_i\neq d_j}\max(0,\,m_0-\|h(y_i)-h(y_j)\|_2)^2.
\end{equation}
If $\mathcal{L}_{\mathrm{ctr}}=0$, then there exist $u_+,u_-\in\mathbb{R}^m$ such that $h(y_i)=u_{d_i}$ for all $i$ and $\|u_+-u_-\|_2\ge m_0$. In particular, the classes are linearly separable by a hyperplane with margin at least $m_0/2$.
\end{proposition}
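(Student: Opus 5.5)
Since every term in $\mathcal{L}_{\mathrm{ctr}}$ is nonnegative, $\mathcal{L}_{\mathrm{ctr}}=0$ forces every individual summand to vanish, and the plan is to read off the conclusions from the two families of terms separately. From the positive-pair terms, $\|h(y_i)-h(y_j)\|_2^2=0$ whenever $d_i=d_j$, so all embeddings within a class coincide. Hence for each label $s\in\{+1,-1\}$ that actually occurs among the $d_i$, we may define $u_s$ as the common value of $h(y_i)$ over $\{i:d_i=s\}$, and then $h(y_i)=u_{d_i}$ for all $i$.

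Next, from the negative-pair terms, $\max(0,m_0-\|h(y_i)-h(y_j)\|_2)^2=0$ for every pair with $d_i\neq d_j$, which means $\|h(y_i)-h(y_j)\|_2\ge m_0$. Substituting the first step gives $\|u_+-u_-\|_2\ge m_0$. This step requires both classes to be present. If only one class occurs, the other vector can be chosen freely, for instance $u_{-s}=u_s+m_0e_1$, and the same conclusions hold trivially.

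For separability, take the perpendicular bisector of the segment $[u_-,u_+]$: set $w=(u_+-u_-)/\|u_+-u_-\|_2$ (well defined, since $m_0>0$ makes the norm positive) and $b=-\langle w,(u_++u_-)/2\rangle$. Then $\langle w,u_\pm\rangle+b=\pm\|u_+-u_-\|_2/2$. So every embedding satisfies $d_i(\langle w,h(y_i)\rangle+b)\ge m_0/2$, which is the claimed hyperplane with geometric margin at least $m_0/2$.

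No step is genuinely hard. The only points needing care are the degenerate single-class case and stating the margin with the unit-normalized $w$; both are routine.
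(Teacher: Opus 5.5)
Your proposal is correct and follows essentially the same argument as the paper. Both force each nonnegative summand to vanish, which yields the class prototypes $u_\pm$ and $\|u_+-u_-\|_2\ge m_0$, and both then use the perpendicular-bisector hyperplane. Normalizing $w$ and treating the single-class case explicitly are small refinements that the paper's proof leaves implicit.
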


\begin{proof}
If $\mathcal{L}_{\mathrm{ctr}}=0$, then for any pair $(i,j)$ with $d_i=d_j$ we must have $\|h(y_i)-h(y_j)\|_2^2=0$, hence all embeddings within a class are identical. Denote the two class prototypes by $u_+$ and $u_-$. For any pair with $d_i\neq d_j$, the hinge term being zero implies $\|u_+-u_-\|_2\ge m_0$.

Define $w:=u_+-u_-$ and $b:=-\tfrac{1}{2}\langle w,u_++u_-\rangle$. Then
\begin{equation}
\langle w, u_+\rangle + b = \tfrac{1}{2}\|u_+-u_-\|_2^2 \ge \tfrac{1}{2}m_0^2,
\qquad
\langle w, u_-\rangle + b = -\tfrac{1}{2}\|u_+-u_-\|_2^2 \le -\tfrac{1}{2}m_0^2,
\end{equation}
so the hyperplane $\{z:\langle w,z\rangle+b=0\}$ separates the two prototypes. The (geometric) margin is at least $\|u_+-u_-\|_2/2\ge m_0/2$.
\end{proof}

\section{Implementation and Dataset Details}

\subsection{Direction Oracle Training}
The Direction Oracle is trained for 20 epochs using AdamW optimization with a learning rate of $10^{-5}$ and batch size of 16, minimizing cross-entropy loss. pre-trained encoders remain frozen throughout training; only projection layers, self-attention and cross-attention modules, and the two-layer MLP classifier head are optimized. Due to the limited size of available labeled data, we train on the full training split without validation and evaluate performance exclusively on an independent held-out test set.

\subsection{TD3B Finetuning}
For tree search-based sampling, we employ trajectory-aware tree search with 20 iterations and 24 children per node, sampling 4 targets per iteration with a buffer size of 32 candidates per target. A replay buffer of 2000 samples with FIFO replacement is maintained to mitigate catastrophic forgetting. Training uses a batch size of 4 with gradient accumulation over 4 steps, a learning rate of $5 \times 10^{-5}$, and 4 WDCE replicates per sample. The KL regularization coefficient $\lambda_{\mathrm{reg}}$ is set to 0.5, and tree search resampling is performed every 10 epochs. Training is conducted on 8 NVIDIA A100 GPUs using PyTorch DDP with synchronized buffer aggregation.

\subsection{Affinity Sources and Structural Inputs for Baselines}
\label{app:rfdiffusion_pdb}
 
\paragraph{Ground-truth binding affinities.}
Ground-truth binding affinities used for evaluation in Section~\ref{subsec:targeted_control} are taken from the SKEMPI 2.0 database \cite{jankauskaite2019skempi}, which compiles experimentally measured dissociation constants ($K_d$) from surface plasmon resonance and spectroscopic techniques and converts them to binding free energy changes ($\Delta\Delta G$). These measurements provide an independent, experimentally grounded reference, as opposed to the pre-trained PepLand-based affinity predictor used internally for fine-tuning rewards.
 
\paragraph{Structural inputs to the RFDiffusion baseline.}
For the RFDiffusion baseline reported in Section~\ref{subsec:directional_asymmetry} and Figure~\ref{fig:exp_dir_aff}A, all structural inputs are experimentally determined PDB-deposited structures with validated agonist or antagonist activity for the corresponding target. Using experimentally resolved rather than model-predicted structures avoids confounding the comparison with potential conformational biases from upstream prediction (e.g., a structure predictor's preference for an agonist- or antagonist-bound state). RFDiffusion was used here strictly as an external baseline and, separately, for synthetic-antagonist data augmentation during dataset construction; it is \emph{not} part of the TD3B pipeline at training or inference time, and TD3B applies no docking, structural filtering, or energy minimization to its outputs.

\subsection{Data Leakage Analysis}
\label{app:leakage}
A natural concern with using an externally pre-trained affinity predictor (trained on the PepLand dataset \cite{zhang2025pepland}) is whether any pair-level overlap exists with our IUPHAR/BPS-derived training and test sets, which could artificially inflate apparent generalization. We performed three checks comparing our $130$ training and $88$ test bidirectional pairs against the $2{,}110$ entries in PepLand: (i) exact peptide sequence match, (ii) exact UniProt-ID match for the target, and (iii) $(target, peptide)$ pair match. We additionally screened for approximate matches at $\geq 80\%$ pairwise sequence identity. Results are summarized in Table~\ref{tab:leakage}.
 
\begin{table}[h]
\centering
\caption{Pair-level data leakage analysis between our IUPHAR/BPS-derived dataset and PepLand. No exact peptide--target pair appears in both datasets.}
\label{tab:leakage}
\begin{tabular}{lcc}
\toprule
Check & Test set & Train set \\
\midrule
Exact peptide match & $0$ & $0$ \\
Exact target match & $0$ & $0$ \\
Exact pair / UniProt-ID pair & $0$ & $0$ \\
Approximate peptide ($\geq 80\%$ id.) & $1$ (diff.\ target) & $3$ (substring) \\
\bottomrule
\end{tabular}
\end{table}
 
The single approximate peptide match (P50984; $87.5\%$ identity, two-residue difference) is paired with entirely different targets in the two datasets (ours: P30532/P32297; PepLand: Q8WSF8). Since the affinity predictor is conditioned on both target and binder, no pair-level binding information transfers, and we conclude that the apparent generalization of TD3B is not attributable to leakage through the affinity predictor.

\section{Extended Experimental Results}
\subsection{Case Studies on Additional Protein Targets}
\label{app:case}
We further applied TD3B to TAAR1, a neuromodulatory GPCR implicated in dopaminergic and serotonergic signaling and strongly linked to schizophrenia, where both agonists and antagonists are of pharmacological interest \citep{dedic2021therapeutic}. As shown in Figure~\ref{fig:panel_TAAR1}, TD3B-generated agonists and antagonists for TAAR1 exhibit distinct binding modes. Similar to the activatory contacts of T1AM (3-iodothyronamine), a small molecule TAAR1 agonist shown in PDB 8JLN (Figure~\ref{fig:panel_TAAR1}A) \citep{xu2023ligand}, the designed agonist preferentially engages with a compact orthosteric pocket spanning TM3, TM5, TM6, and TM7, \cite{jones2020structural} including residues associated with receptor activation \citep{huang2025structure} (Figure~\ref{fig:panel_TAAR1}B), and the designed antagonist occupies a broader binding region that partially overlaps the orthosteric site but selectively avoids key activation-associated transmembrane contacts (Figure~\ref{fig:panel_TAAR1}C). 
\begin{figure}[H]
    \centering
\includegraphics[width=\linewidth]{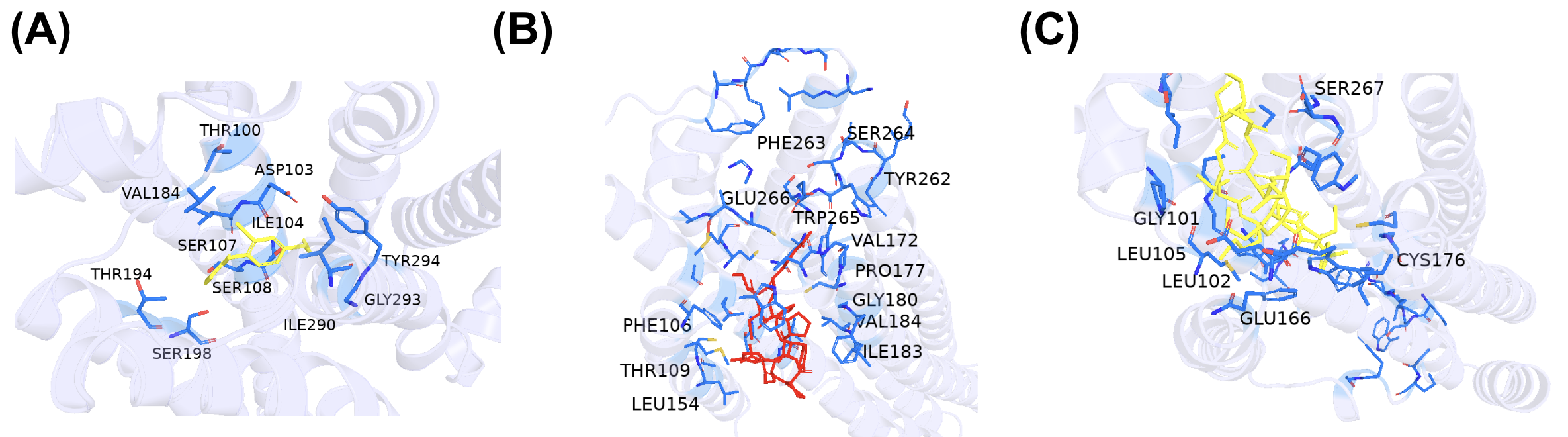} 
    \caption{Evaluation of TD3B on TAAR1. \textbf{(A)} Existing TAAR1 agonist bound to TAAR1. \textbf{(B)} TD3B-designed agonist bound TAAR1. \textbf{(C)} TD3B-designed antagonist bound TAAR1.}
    \label{fig:panel_TAAR1}
\end{figure}

\subsection{Per-Target Performance Breakdown}
\label{app:per_target}
Table~\ref{tab:per_target} reports per-target gated reward and direction accuracies for TD3B with $3$-seed evaluation. Cross-seed variance is small ($\pm 0.03$ on aggregated gated reward), confirming that training is stable; the residual per-target variance is concentrated on a small number of targets where agonist generation fails ($\mathrm{DA}(d^\star{=}+1)=0$ on $3/10$ targets while $\mathrm{DA}(d^\star{=}-1)=1.00$ remains intact). This pattern is consistent with the agonist/antagonist asymmetry discussed in App.~\ref{app:ablation}: targets with limited agonist training signal are disproportionately affected, while antagonist generation is essentially saturated. Notably, on the unseen target P35368, TD3B reaches $\mathrm{DA}=1.00$ in both directions, indicating that the asymmetry is target-specific rather than fundamental.
 
\begin{table}[h]
\centering
\caption{Per-target gated reward and direction accuracies for TD3B (3-seed evaluation).}
\label{tab:per_target}
\resizebox{0.7\linewidth}{!}{
\begin{tabular}{lcccccc}
\toprule
Target (UniProt) & $n$ & Aff($d^\star{=}+1$) & Aff($d^\star{=}-1$) & DA($d^\star{=}+1$) & DA($d^\star{=}-1$) & Gated \\
\midrule
P30532 & $200$ & $7.05$ & $8.04$ & $0.00$ & $1.00$ & $4.00$ \\
P32239 & $200$ & $6.89$ & $7.22$ & $1.00$ & $1.00$ & $7.00$ \\
P32246 & $200$ & $6.36$ & $7.15$ & $1.00$ & $1.00$ & $6.53$ \\
P32297 & $200$ & $6.24$ & $7.29$ & $0.00$ & $1.00$ & $3.72$ \\
P35368 & $200$ & $6.19$ & $6.94$ & $1.00$ & $1.00$ & $6.08$ \\
P41587 & $200$ & $6.97$ & $6.91$ & $0.00$ & $0.00$ & $0.97$ \\
P51677 & $200$ & $7.94$ & $6.97$ & $1.00$ & $1.00$ & $7.35$ \\
P54282 & $200$ & $4.52$ & $4.88$ & $0.00$ & $1.00$ & $2.43$ \\
Q63447 & $100$ & $5.20$ & --     & $1.00$ & --     & $5.16$ \\
Q6W5P4 & $200$ & $5.03$ & $6.67$ & $1.00$ & $1.00$ & $5.32$ \\
\bottomrule
\end{tabular}
}
\end{table}

\subsection{Ablation Studies}
\label{app:ablation}
\paragraph{Weighted sampling.}
TD3B differs from TR2-D2 along two axes: (i) training-time innovations (gated reward + contrastive loss + KL-regularized fine-tuning), and (ii) inference-time weighted resampling (Algorithm~\ref{alg:InferenceDirectionalMDLM}, Eq.~\ref{eq:importance-weight2}). To isolate the contribution of each, we apply the same resampling procedure to TR2-D2 (Table~\ref{tab:resample_ablation}). Two findings emerge. First, even \emph{without} resampling, TD3B's training-time changes alone deliver a $26.5\%$ improvement in gated reward over TR2-D2 ($4.25$ vs.\ $3.36$, $p{=}0.0078$, Welch's $t$-test). Second, weighted resampling provides a complementary, additional gain on top of training-time gains ($+25.4\%$ on TD3B; both methods benefit). TD3B remains best under both conditions, indicating the components are complementary rather than redundant.
\begin{table}[h]
\centering
\caption{Decomposing training-time vs.\ inference-time contributions. TR2-D2 evaluated with and without the same weighted resampling procedure used by TD3B (3-seed evaluation).}
\label{tab:resample_ablation}
\resizebox{0.6\linewidth}{!}{
\begin{tabular}{lccc}
\toprule
Setting & DA($d^\star{=}+1$) & DA($d^\star{=}-1$) & Gated \\
\midrule
TR2-D2 w/o resampling & $0.178\pm0.127$ & $0.875\pm0.053$ & $3.36\pm0.24$ \\
TR2-D2 w/  resampling & $0.508\pm0.011$ & $1.000\pm0.000$ & $4.64\pm0.07$ \\
TD3B w/o resampling   & $0.650\pm0.163$ & $0.683\pm0.040$ & $4.25\pm0.20$ \\
TD3B w/  resampling   & $\mathbf{0.795\pm0.017}$ & $\mathbf{1.000\pm0.000}$ & $\mathbf{5.33\pm0.03}$ \\
\bottomrule
\end{tabular}
}
\end{table}

\section{Algorithms}
\label{app:algorithms}

\begin{algorithm}[H]
\caption{\textbf{Direction-Only Amortized Fine-Tuning of an MDLM}}
\label{alg:TrainingDirectionalMDLM}
\begin{algorithmic}[1]
    \State \textbf{Input:} pre-trained MDLM $p_{\theta_0}(y)$
    \State \hspace{2.7em} directional dataset $\mathcal{D}=\{(y^{(n)}, a^{(n)})\}_{n=1}^{N}$
    \State \hspace{2.7em} direction oracle $f_\phi(y)$
    \State \hspace{2.7em} replay buffer $\mathcal{B} \gets \emptyset$
    \State \textbf{Hyperparameters:} learning rate $\eta$, temperature $\alpha$, contrastive weight $\lambda_{\mathrm{ctr}}$, KL weight $\lambda_{\mathrm{reg}}$
    \While{not converged}
        \State Sample minibatch $\{(y,a)\}$ from $\mathcal{D}$
        \State Compute direction labels $d(y)\in\{+1,-1\}$ and weights $\kappa(y)$
        \State Update direction oracle parameters $\phi$ using $\mathcal{L}_{\mathrm{dir}}$
        \State \Comment{Populate replay buffer with direction-aligned samples}
        \State Sample candidate sequences $\{\tilde{y}_k\}_{k=1}^{M}$ from $p_{\theta}(y)$
        \For{each $\tilde{y}_k$}
            \State Compute direction score $S(\tilde{y}_k)=\sigma(d^\star \cdot f_\phi(\tilde{y}_k)/\tau)$
            \State Set importance weight $w(\tilde{y}_k)\propto \exp(S(\tilde{y}_k)/\alpha)$
        \EndFor
        \State Add weighted samples $\{(\tilde{y}_k,w(\tilde{y}_k))\}$ to replay buffer $\mathcal{B}$
        \State \Comment{Fine-tune MDLM using WDCE and contrastive objectives}
        \State Compute $\mathcal{L}_{\mathrm{WDCE}}(\theta)$ using samples from $\mathcal{B}$
        \State Compute $\mathcal{L}_{\mathrm{ctr}}(\theta)$ on non-negative samples
        \State Compute regularization $\mathcal{L}_{\mathrm{reg}}(\theta)=\mathrm{KL}(p_\theta\|p_{\theta_0})$
        \State Update $\theta \gets \theta - \eta \nabla_\theta\big(\mathcal{L}_{\mathrm{WDCE}} + \lambda_{\mathrm{ctr}}\mathcal{L}_{\mathrm{ctr}} + \lambda_{\mathrm{reg}}\mathcal{L}_{\mathrm{reg}}\big)$
    \EndWhile
    \State \textbf{return} fine-tuned generator $p_{\theta^\star}(y)$ and oracle $f_{\phi^\star}$
\end{algorithmic}
\end{algorithm}

\begin{algorithm}[H]
\caption{\textbf{Sampling Directional Allosteric Binders}}
\label{alg:InferenceDirectionalMDLM}
\begin{algorithmic}[1]
    \State \textbf{Input:} fine-tuned MDLM $p_{\theta^\star}(y)$
    \State \hspace{2.7em} desired direction $d^\star\in\{+1,-1\}$
    \State \hspace{2.7em} direction oracle $f_{\phi^\star}(y)$
    \State \hspace{2.7em} number of candidates $M$
    \State Sample candidate binders $\{y_m\}_{m=1}^{M}$ i.i.d.\ from $p_{\theta^\star}(y)$
    \For{$m=1$ \textbf{to} $M$}
        \State Compute direction score $S(y_m)=d^\star \cdot f_{\phi^\star}(y_m)$ 
        \State Set weight $w_m \propto \exp(S(y_m))$
    \EndFor
    \State Resample $y \sim \mathrm{Categorical}(\{w_m\}_{m=1}^{M})$
    \State \textbf{return} $y$ \Comment{directionally biased binder}
\end{algorithmic}
\end{algorithm}

\end{document}